\theoremstyle{definition}
\newtheorem{theorem}{Theorem}
\newtheorem{definition}{Definition}
\newtheorem{corollary}{Corollary}
\newtheorem{lemma}{Lemma}
\newcommand{\BQP}{\textsf{BQP}}
\newcommand{\NP}{\textup{NP}}
\newcommand{\QMA}{\textsf{QMA}}
\newcommand{\UQMA}{\textsf{UQMA}}
\newcommand{\dUQMA}{\textsf{dUQMA}}
\newcommand{\DUQMA}{\textsf{DUQMA}}
\newcommand{\RED}[1]{#1}
\newcommand{\proofpara}[1]{\par\medskip\noindent{\bfseries #1.}\par\nobreak\smallskip\noindent\ignorespaces}
\title{Computational complexity of Berry phase estimation in topological phases of matter 
}
\author[1]{Ryu Hayakawa*\footnote{ryu.hayakawa@yukawa.kyoto-u.ac.jp}}
\author[2]{Kazuki Sakamoto\footnote{kazuki.sakamoto.osaka@gmail.com}}
\author[3]{Chusei Kiumi}
\affil[1]{Yukawa Institute for Theoretical Physics \& The Hakubi Center, Kyoto University, Japan}
\affil[2]{Graduate School of Engineering Science, The University of Osaka\\
1-3 Machikaneyama, Toyonaka, Osaka 560-8531, Japan.}
\affil[3]{Center for Quantum Information and Quantum Biology, The University of Osaka, 1-2 Machikaneyama, Toyonaka 560-0043, Japan}
\date{}
\begin{document}

\begingroup
\let\newpage\relax
\maketitle
\endgroup

\begin{abstract}
The Berry phase is a fundamental quantity in the classification of topological phases of matter. In this paper, we present a new quantum algorithm and several complexity-theoretical results for the Berry phase estimation (BPE) problems. Our new quantum algorithm achieves BPE in a more general setting than previously known quantum algorithms, with a theoretical guarantee. For the complexity-theoretic results, we consider three cases. First, we prove $\mathsf{BQP}$-completeness when we are given a guiding state that has a large overlap with the ground state. This result establishes an exponential quantum speedup for estimating the Berry phase. Second, we prove $\mathsf{UQMA} \cap \mathsf{co}$-$\mathsf{UQMA}$-completeness when we have an $\textit{a priori}$ bound for the ground-state energy. Here, $\mathsf{UQMA}$ is the unique witness version of $\mathsf{QMA}$, and $\mathsf{UQMA} \cap \mathsf{co}$-$\mathsf{UQMA}$ precisely captures the complexity of BPE without the known guiding state. Remarkably, this problem is, to our knowledge, the first natural problem complete for $\mathsf{UQMA} \cap \mathsf{co}$-$\mathsf{UQMA}$. Third, we show $\mathsf{P}^{\mathsf{UQMA} \cap \mathsf{co}\text{-}\mathsf{UQMA}\mathsf{[log]}}$-hardness and containment in $\mathsf{P}^{\mathsf{PGQMA[log]}}$ when we have no additional assumption. These results advance the role of quantum computing in the study of topological phases of matter and provide a pathway for clarifying the connection between topological phases of matter and computational complexity.
\end{abstract}

{
  \hypersetup{linkcolor=black}
  \tableofcontents
}

\section{Introduction}
In modern condensed matter physics, the classification of topological phases has become a central problem.
Topological invariants, such as Berry phases and Chern numbers, play an important role in this classification~\cite{berry1984quantal,shapere1989geometric,hatsugai2006quantized}.
\RED{For example, in one-dimensional insulators, the Berry phase over the Brillouin zone, known as the Zak phase, can be quantized by symmetries such as inversion or chiral symmetry and can distinguish topologically distinct phases, as in the Su--Schrieffer--Heeger model~\cite{Zak1989,SuSchriefferHeeger1979}.
The Berry phase also gives the bulk electric polarization of one-dimensional crystals, and different quantized polarizations can correspond to distinct symmetry-protected phases~\cite{KingSmithVanderbilt1993,Resta1994}.
In two-dimensional Chern insulators and quantum Hall systems, the Chern number is obtained by integrating the Berry curvature over the Brillouin zone; since the Berry curvature is the local curvature associated with the Berry connection, this provides a higher-dimensional generalization of Berry-phase topology~\cite{ThoulessKohmotoNightingaleNijs1982}.
Similarly, in Thouless pumping, the transported charge over one cycle is described by a Chern number in the combined momentum--time parameter space, again built from Berry curvature~\cite{Thouless1983}. A given Berry phase value can therefore encode physical information such as bulk polarization or the distinction between symmetry-protected phases. Such Berry-phase invariants are typically studied in gapped systems, where the relevant eigenstate or occupied subspace can be followed continuously and the invariant is stable under small perturbations. Beyond topological classification, geometric phases also have applications in quantum information: when the Berry phase is generalized to a degenerate eigenspace, it becomes a non-Abelian holonomy, which underlies holonomic quantum computation~\cite{ZR99}.}
However, computing such topological invariants in quantum many-body systems is generally difficult and remains largely unexplored in strongly correlated large systems.

One potential approach to overcoming the limitations of computing the topological invariants of topological phases of matter is to use quantum computing.
Indeed, there has been growing attention to simulate topological matter with quantum computers and observe topological phase transitions~\cite{roushan2014observation,koh2024realization,mei2020digital,xiao2023robust,smith2022crossing,kitagawa2010exploring}.
Moreover, there are several recent proposals to directly estimate the Berry phase using quantum computers
\cite{murta2020berry,niedermeier2024quantum,tamiya2021calculating,mootz2025efficient}.

However, it is unclear from the previous studies whether there is a {\it quantum advantage} in the application of quantum computing for the classification of the topological phases of matter.
Moreover, previous algorithmic approaches to Berry phase estimation~\cite{murta2020berry} have faced intrinsic limitations that restrict their applicability. This approach relies on a dynamical-phase cancellation trick using forward and backward adiabatic evolutions, which effectively isolates the geometric phase. However, this procedure restricts the estimable Berry phase to the interval $[0,\pi)$ if the Hamiltonian does not have time-reversal symmetry. As a result, the algorithm cannot access the full range of the Berry phase modulo $2\pi$, which limits its utility and generality.

In this paper, in order to rigorously understand the potential and limitations of applying quantum computers for topological phases of matter,
we initiate the study of the {\it computational complexity} of estimating the topological invariants.
As an initial step, we focus on the problem of estimating the Berry phase, which is a central topological invariant in topological phases (Figure~\ref{fig:berryphase_intro}).

\begin{figure}
    \centering
    \includegraphics[width=0.7\linewidth]{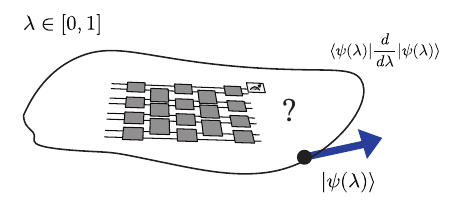}
    \caption{{The problem that we study in this paper: what is the (quantum) computational complexity of estimating the Berry phase? How can we estimate the Berry phase with quantum computers?}}
    \label{fig:berryphase_intro}
\end{figure}

The Berry phase estimation problem is fundamentally different from the local Hamiltonian (LH) problem, which is a canonical problem in quantum computational complexity. 
In the LH problem, we are asked to estimate the ground state energy of a given Hamiltonian.
The \QMA-completeness of the LH problem was shown in~\cite{kitaev2002classical} and the result has been extended to various LH problems, including physically motivated Hamiltonians, see ~\cite{cubitt2018universal,piddock2015complexity} for example. \RED{Here, physically motivated Hamiltonians refer to Hamiltonians satisfying structural restrictions commonly arising in many-body physics, such as geometrically local interactions or restricted interaction types.}
The \BQP-completeness of the guided version of the LH problems, i.e., a problem in which we are given an ansatz of the ground state, has also been studied recently~\cite{gharibian2022dequantizing,cade2022improved}. 
\RED{Our Berry phase estimation problem is similar in that a guiding state is used to specify and access a relevant eigenstate, but it is different in the quantity to be estimated. Instead of estimating an eigenenergy, we estimate the geometric phase acquired by the eigenstate under adiabatic evolution along a closed path.} If we adiabatically evolve a (non-degenerate) eigenstate $\ket{\psi}$ of a Hamiltonian according to some closed path, we will obtain the same state. However, even if we obtain the same state, it can acquire a global phase. It is known that after a loop of adiabatic time evolution, the obtained state can be written in the form
$$e^{-i\theta_D}e^{i\theta_B}\ket{\psi}.$$
\RED{This is the usual form of the phase acquired under adiabatic evolution~\cite{berry1984quantal}.}
The first phase $\theta_D$ is called a dynamical phase, and it is dependent on the energy of the evolved state as well as the speed of adiabatic time evolution.
The second phase $\theta_B$ is called the Berry phase or a geometric phase. Interestingly, this term does not depend on the energy and the speed of the time evolution.
\RED{As a consequence, Berry phase estimation targets a geometric property of the eigenstate path, rather than the eigenenergy itself. This contrasts with the usual LH problem, which is an energy-estimation problem.}

\subsection{Our results}

\paragraph{Problem definitions}
We first introduce several setups for the Berry phase estimation problems.
We study the following problem in several settings.\vspace{1em} \\
\noindent
\underline{Berry Phase Estimation (BPE)}
\begin{itemize}
    \item \textbf{Input:} A description of continuously parameterized $k$-local $n$-qubit Hamiltonians $\{H(\lambda)\}_\lambda\in[0,1]$  with $H(0)=H(1)$ that have non-degenerate ground states and inverse-polynomial spectral gap\footnote{\RED{Here we assume that \(H(\lambda)\) is described succinctly as a function of \(\lambda\): given a rational approximation of \(\lambda\) and a requested polynomial precision, the description outputs the local terms and coefficients of \(H(\lambda)\) to that precision in polynomial time. The polynomial bounds on \(\|\partial_\lambda H\|\) and \(\|\partial_\lambda^2 H\|\) in Definition~\ref{def:BPE_guided} ensure that the use of finitely many such descriptions in the algorithm only incurs inverse-polynomial discretization error.}}.
    \item \textbf{Output:} $1$ if $\theta_B \in [a,b]$ and $0$ if $\theta_B \in [c,d]$ {under the promise that either holds}.
\end{itemize}
Here, $[a,b]$ and $[c,d]$ are intervals in $[0,2\pi]$ that are separated by $\delta > 1/\mathrm{poly}(n)$, and $b-a, d-c > 1/\mathrm{poly}(n)$.

We study the complexity of the BPE problem in the following three settings:
\begin{enumerate}
    \item \textbf{BPE with known guiding state} (Definition~\ref{def:BPE_guided}): \\
          In this problem, we are given a succinct description of a \textit{guiding state} for the adiabatic time evolution that has at least inverse-polynomial overlap with the {initial eigenvector} $\ket{\psi(0)}$, which is the ground state of $H(0)$.
    \item \textbf{BPE with energy threshold} (Definition~\ref{def:BPE_withEth}):\\
          {In this problem, we are given a parameter $E_{\mathrm{th}}$ that separates
          the ground-state energy $E_0(0)$ and the first excited-state energy $E_1(0)$ of $H(0)$, satisfying
          \[
              E_0(0) + 1/\mathrm{poly}(n) \leq E_{\mathrm{th}} \leq E_1(0) - 1/\mathrm{poly}(n).
          \]
          }
          The task is to solve the BPE in this situation {\it without} the guiding state.

    \item \textbf{BPE without additional inputs} (Definition~\ref{def:BPE_withoutEth}):\\
          Finally, this is the problem of deciding the BPE without the guiding state and the energy threshold.
\end{enumerate}
For the formal formulation, see the referred definitions.

\paragraph{Our results}
Our results on the computational complexity of these problems are as follows.

\begin{enumerate}
    \item \textbf{BPE with known guiding state}
          is \BQP-complete (Theorem~\ref{thm:BQPcomp}).
    \item \textbf{BPE with energy threshold}
          is \RED{$\mathsf{UQMA} \cap \mathsf{co}$-$\mathsf{UQMA}$}-complete (Theorem~\ref{thm:UQMAcomp}).
    \item \textbf{BPE without energy threshold}
          is
        $\mathsf{P}^{\RED{\mathsf{UQMA} \cap \mathsf{co}-\mathsf{UQMA}}\mathsf{[log]}}$-hard under polynomial-time truth-table reductions and contained in $\mathsf{P}^{\mathsf{PGQMA[log]}}$
(\RED{Corollary}~\ref{thm:P^PGQMAcomp}).
\end{enumerate}
\RED{In these hardness results, the Hamiltonian families act on qubits, each local term has operator norm at most \(\mathrm{poly}(n)\), and the hardness holds already for \(5\)-local Hamiltonians.}
\RED{Throughout this paper, hardness and completeness are with respect to polynomial-time many-one reductions (Karp reductions) unless stated otherwise.}

The first main result shows that the BPE with known guiding state (Definition~\ref{def:BPE_guided}) is \BQP-complete (Theorem~\ref{thm:BQPcomp}).
Here, \BQP~is the class of decision problems that can be solved efficiently by quantum computers.
{The containment in BQP is shown through a new quantum algorithm for Berry phase estimation that we introduce in this paper. 
The BQP-hardness is shown through \RED{an adaptation of the standard Feynman--Kitaev circuit-to-Hamiltonian construction}. 
}
By assuming a standard assumption that $\mathsf{BPP}\neq\BQP$, our result shows {\it a superpolynomial quantum advantage} for this BPE problem.
This result strongly supports the application of quantum computing for studying topological phases of matter with theoretical quantum advantage for universal quantum computers.
{One might suspect that \RED{having a succinct classical description of a guiding state} could make the problem classically tractable.
However, this is not the case since our result implies that if one could estimate the Berry phase using a guiding state, then any BQP problem could be solved.}
{A related open question is stated in Section~\ref{sec:discussion} concerning whether one can show an efficient classical algorithm for the Berry phase estimation in constant precision rather than the inverse-polynomial precision. }

\RED{
Our results for the second and the third problems show that the BPE without guiding states is also captured by certain complexity classes.
}
First, observe that the Berry phase estimation problem for a local Hamiltonian is naturally defined for {\it non-degenerate} local Hamiltonians for the target eigenstate, such as the ground state.
\RED{
Moreover, the BPE problem without a guiding state is unlikely to be in BQP in the worst-case. 
This is because preparing ground states of local Hamiltonians is closely related to the local Hamiltonian problem, which is QMA-complete~\cite{kitaev2002classical}. 
Thus, without additional structure, quantum computers are not expected to prepare ground state efficiently in general\footnote{
This is a worst-case statement, and thus efficient ground state preparation algorithms may still exist for structured Hamiltonians, such as certain frustration-free Hamiltonians~\cite{schwarz2013information, gilyen2017preparing} and one-dimensional gapped Hamiltonians~\cite{landau2015polynomial}.
}.
}
Therefore, it is natural to expect that the complexity of BPE without the guiding state is relevant to the setting where a prover sends a quantum witness and a verifier performs a polynomial-time quantum computation, such as \QMA.
Among such complexity classes, one relevant complexity class would be 
\UQMA~(unique Quantum Merlin-Arthur)
because we are considering a non-degenerate setting. 
\RED{
Another important observation is that BPE is closed under taking the complements of the conditions for the output, that is, we are interested in two complement intervals in $[0,2\pi)$.
This implies that the complexity class which captures BPE also has to be closed under taking the complements.
This is again a sharp contrast with the usual local Hamiltonian problems.
As a result, we show that the complexity of BPE without guiding state is captured by $\UQMA \cap \mathsf{co}\text{-}\UQMA$.
}
Interestingly, BPE with energy threshold is, \RED{to the best of our knowledge}, the first natural problem \RED{complete for $\UQMA \cap \mathsf{co}\text{-}\UQMA$}. 
{In \cite{aharonov2022pursuit}, it is shown that the problem called unique Local Hamiltonian, in which it is promised that the ground state of the given Hamiltonian is non-degenerate, is complete for the class $\mathsf{UQMA}$. 
\RED{An oracle separation between $\mathsf{UQMA}$ and $\mathsf{QMA}$ has recently been shown in \cite{anshu2024uniqueqma}, which can be seen as evidence that $\mathsf{UQMA}$ is strictly weaker than $\mathsf{QMA}$.} 
\RED{In this sense, our result identifies BPE with an energy threshold as a natural physical problem in the presumably smaller intersection class \(\UQMA\cap\mathsf{co}\text{-}\UQMA\), in contrast with gapped ground-energy estimation, which is captured by \(\mathsf{PGQMA}\).}
}

\subsection{Proof overview}

\paragraph{Containment and our new quantum algorithm for BPE}
The key technical contribution {in terms of algorithmic aspects of Berry phase estimation} is a new quantum algorithm that removes the symmetry restrictions of Murta et al.~\cite{murta2020berry}, which limited the Berry phase to $[0,\pi)$ {with a simple observation}. Our method compares two adiabatic evolutions with different energy scales: the Berry phase remains invariant, while the dynamical phase rescales, enabling their separation. Using quantum phase estimation, we recover $\theta_B$ modulo $2\pi$ with inverse-polynomial precision. 
Given \RED{a polynomial-size classical description of} a guiding state $\ket{c}$ \RED{from which $\ket{c}$ can be efficiently prepared, and assuming that $\ket{c}$ has} inverse-polynomial overlap with $\ket{\psi(0)}$, standard ground-state preparation techniques~\cite{ge2019faster,lin2020near} yield $\ket{\psi(0)}$ efficiently, after which our procedure runs in polynomial depth. 
 Hence, BPE with a guiding state is contained in \BQP. 
{Containment in \dUQMA~and $\mathsf{P}^{\mathsf{PGQMA[log]}}$} also comes from this new quantum algorithm. 

\paragraph{\BQP-hardness}
In order to show the \BQP-hardness of the Berry phase estimation problem, we provide \RED{an adaptation of the standard Feynman--Kitaev circuit-to-Hamiltonian construction}. The constructed Hamiltonian takes the form
$$
    H_r(\lambda) = H_{\mathrm{hist}} + r V(\lambda),
$$
with some choice of $r\in \mathbb{R}_+$.
\RED{The $H_{\mathrm{hist}}$ term comes from a previously known circuit-to-Hamiltonian construction where the ground state of $H_{\mathrm{hist}}$ is given by a so-called history state.
Here we assume that the computation is preceded by \(M\) initial identity gates. 
Writing \(L=T+M\) for the total number of gates and \(N=L+1\), we take}
\[
    \RED{
    V(\lambda)=
    \left(
    e^{2\pi i \lambda}\ket{1}\bra{0}_\text{out}
    +e^{-2\pi i \lambda}\ket{0}\bra{1}_\text{out}
    \right)
    \otimes\ket{1}\bra{1}_{C_L}.
    }
\]
\RED{
Thus \(V(\lambda)\) acts on the output qubit only at the final clock state. 
The initial-idling step makes the history state have large overlap with an efficiently preparable guiding state, while the final-clock projector extracts the output of the completed computation with amplitude \(1/\sqrt N\).
}
Now we have $H_r(0)=H_r(1)$. 
We use a second-order  perturbation theory with a suitable choice of parameters to show that the Berry connection $\mathcal{A}_\lambda=\bra{\psi(\lambda)}\frac{\mathrm{d}}{\mathrm{d}\lambda}\ket{\psi(\lambda)}$ is contained in
$
    \delta \leq i\mathcal{A}_\lambda\leq \frac{\pi}{2}
$
for YES instances of the \BQP-problem and
$
    \delta \leq -i\mathcal{A}_\lambda\leq \frac{\pi}{2}
$ for NO instances.
This will lead to the bounds for intervals of the Berry phase in the YES instances and NO instances.

\paragraph{$\UQMA \cap \mathsf{co}\text{-}\UQMA$-hardness and {$\mathsf{P}^{\UQMA \cap \mathsf{co}\text{-}\UQMA\mathsf{[log]}}$-hardness}}
\RED{
In order to show the $\UQMA \cap \mathsf{co}\text{-}\UQMA$-hardness of the BPE problem without the guiding state, 
we apply a reduction analogous to the \BQP-hardness proof. 
However, we face a technical issue: we cannot directly use the standard circuit-to-Hamiltonian construction since $\UQMA \cap \mathsf{co}\text{-}\UQMA$ has two verification circuits.
Thus we introduce a new complexity class $\dUQMA$ \footnote{We keep open the position of ``\DUQMA'' for another complexity class, see Section~\ref{sec:discussion}. } (doubly Unique QMA, formally defined in Definition~\ref{dfn:dUQMA}) which has one verification circuit but has two output qubits.
In Section~\ref{sec:dUQMA_characterization}, we show that $\dUQMA$ is equivalent to $\UQMA \cap \mathsf{co}\text{-}\UQMA$ and thus $\dUQMA$-hardness implies $\UQMA \cap \mathsf{co}\text{-}\UQMA$-hardness.
The verification circuit of $\dUQMA$ has the following properties:
\begin{itemize}
    \item \textbf{Yes instances}:  there is a unique witness $\ket{w_{\text{yes}}}$ s.t. the verifier outputs $11$ ($1$ for the first output and $1$ for the second output) with high probability after measuring the two output qubits. For any other witness orthogonal to $\ket{w_{\text{yes}}}$, the probability of outputting 1 after measuring the second output qubit is low. 
    \item \textbf{No instances}:
    there is a unique witness $\ket{w_{\text{no}}}$ s.t. the verifier outputs $01$ with high probability after measuring the two output qubits. 
    For any other witness orthogonal to $\ket{w_{\text{no}}}$, the probability of outputting 1 after measuring the second output qubit is low. 
\end{itemize}
As it can be seen from the above characterization, \dUQMA~possesses ``uniqueness'' {\it for both YES and NO cases}, that's why we have named this new complexity class as ``doubly Unique Quantum Merlin Arthur''.
It is worth explaining why $\dUQMA$ needs two output qubits.
$\UQMA \cap \mathsf{co}\text{-}\UQMA$ have two unique witnesses in total.
Thus the verifier needs to distinguish three cases: (1) unique witness for YES instance, (2) unique witness for NO instance, and (3) states orthogonal to these two states.
This forces the verifier for $\dUQMA$, which is equivalent to $\UQMA \cap \mathsf{co}\text{-}\UQMA$, to have an output space with dimension at least three.
}

\RED{
We also explain the reason that $\dUQMA$ captures the complexity of BPE.
In BPE without the guiding state, we need to verify two circumstances: (1) verification that we are performing adiabatic time evolution for the unique ground state, and (2) the verification of the membership of the obtained Berry phase in the given two intervals. 
One of the output qubits corresponds to the verification of the ground state, and the other output qubit corresponds to the Berry phase estimation. 
The uniqueness condition in both YES and NO instances, as well as the bounded acceptance probability, allows us to construct Hamiltonians whose unique ground state has an eigenvalue lower than an energy threshold  $E_{\text{th}}$ for {\it both YES and NO cases}.
After a detailed perturbative analysis, we obtain Berry-phase bounds for YES and NO instances that imply the desired \dUQMA-hardness reduction.
The $\mathsf{P}^{\RED{\mathsf{UQMA}\cap\mathsf{co}\text{-}\mathsf{UQMA}}\mathsf{[log]}}$-hardness of the BPE problem without the energy threshold under polynomial-time truth-table reductions can be proven with a similar idea to \cite{yirka2025note}. 
}

\subsection{Discussion and conclusion}
\label{sec:discussion}

In this paper, we initiated the study of the computational complexity of estimating a topological invariant that is related to the classification of topological phases of matter.
We have shown a quantum advantage result for this problem by a \BQP-completeness result and complexity results related to variants of unique \QMA.
{The classification of topological phases can be important not only theoretically but also for applications in early fault-tolerant quantum computers because the topological nature makes them robust against local noise~\cite{xiao2023robust}.}
We believe that our results open a wide possibility of studying the task of classification of topological phases of matter from the perspective of quantum computational complexity.   
We highlight some open questions below.

\paragraph{Computational complexity of other topological invariants.}
We have studied the computational complexity of estimating the Berry phase. 
\RED{Topological phases are characterized not only by Berry phases but also by other topological invariants, such as Chern numbers, winding numbers, and $\mathbb{Z}_2$ indices. A particularly natural next target in our setting is the Chern number. Although the Chern number is formally defined as an integral of the Berry curvature, the Berry curvature can be understood as the infinitesimal Berry phase per unit area in parameter space. Thus, the Chern number can be viewed as a global invariant obtained by accumulating local Berry-phase information over a closed parameter manifold.}
Identifying the computational complexity of computing the Chern number seems to be an interesting open question.

\RED{
\paragraph{Relation to holonomic quantum computation.}
A related but distinct direction is holonomic quantum computation. Zanardi and Rasetti showed that non-Abelian geometric phases associated with adiabatic evolution in degenerate eigenspaces can be used to implement quantum computation \cite{ZR99}. In contrast, the present work focuses on the computational complexity of estimating the Abelian Berry phase of a non-degenerate eigenstate. Thus, our results do not directly address the degenerate, non-Abelian setting underlying holonomic quantum computation. Nevertheless, they suggest a complementary complexity-theoretic question: what is the complexity of estimating non-Abelian holonomies themselves? Progress on this question could help clarify, from a complexity-theoretic viewpoint, how the geometric structures used in holonomic quantum computation relate to computational hardness. We leave this extension to degenerate eigenspaces as an interesting direction for future work.
}

\paragraph{Quantum-inspired classical algorithm for Berry phase estimation}
We have studied the Berry phase estimation problem in the regime of inverse-polynomial precision.
In the energy estimation problem (guided LH problem), classical containment for constant precision problems has been known \cite{gharibian2022dequantizing}. 
\RED{Those dequantization results apply to energy estimation, where the target quantity can be expressed as a spectral property of a single Hamiltonian relative to a guided state. Extending such techniques to the adiabatic/Berry-phase setting is nontrivial: BPE is a holonomy-type quantity depending on phase-coherent information along a continuous path of Hamiltonians, rather than a spectral estimate for a single Hamiltonian. In particular, one would need to keep track of relative phases between eigenstates at different parameter values.}
An important open question is whether one can show classical containment for the Berry phase estimation in the constant precision regime.

\paragraph{Quantum advantage for Berry phase estimation without the guiding state}
{
Even though we have shown a theoretical superpolynomial quantum advantage for Berry phase estimation, the setting of a given guiding state is rather artificial. Therefore, it is an important open problem to identify a more natural problem for topological phases of matter in which quantum computers gain substantial speedup. 
\RED{A related variant is a ``guidable'' version of BPE, where one is promised that a succinctly describable guiding state exists but it is not given as part of the input. In such a formulation, the description of the guiding state could naturally serve as a classical witness, analogously to guidable local Hamiltonian problems~\cite{weggemans2024guidablea}. Determining the precise complexity of this guidable BPE variant is an interesting direction for future work.}
As topological invariants are supposed to be stable against noise, it is also an important open direction to investigate practical quantum advantage for early fault-tolerant quantum computers. 
}

\paragraph{Reduction to Physical Hamiltonians}
We have shown our complexity-theoretic hardness for constant local Hamiltonians. However, it is an important open direction to extend our results to more physically motivated Hamiltonians, \RED{by which we mean Hamiltonians with additional structures commonly appearing in condensed-matter models, such as geometric locality in low spatial dimension, nearest-neighbor interactions, or restricted interaction types,} and find more connections between the Hamiltonians considered in condensed matter physics and topological phases.

\paragraph{Related complexity classes} 
\RED{We have shown that BPE with an energy threshold is complete for \(\UQMA\cap\mathsf{co}\text{-}\UQMA\). This characterization suggests that Berry phase estimation is naturally tied not only to unique-witness verification, but also to problems whose YES and NO cases both admit unique verifiable witnesses.}

There are other interesting complexity classes as well. First, can one further show the \UQMA-hardness of BPE with energy threshold?  This would lead to $\UQMA = \mathsf{co}\text{-}\UQMA$. 

Second, there is a complexity class that can be named to be ``\DUQMA''. 
In~\cite{ambainis2014physical}, the complexity class called  \textsf{DQMA} was introduced.  
This is a \textsf{QMA}-analog of \textsf{DP}, which is characterized with two languages $L_1\in \mathsf{NP}$ and $L_2\in \mathsf{co}\text{-}\mathsf{NP}$ as $L= L_1 \cap L_2$. (``Difference'' of two {\sf NP} problems.) 
Then, we can consider an analogous complexity class \DUQMA~(other than {\sf ``d''UQMA}). 
Is there any BPE problem whose complexity corresponds to this \DUQMA?

\subsection{Organization}
The remainder of the paper is organized as follows. In Section~\ref{sec:premliminaries}, we introduce preliminaries. In Section~\ref{sec:main}, we introduce formal definitions of the problems and our main results. 
In Section~\ref{sec:improved_algorithm}, we present our new quantum algorithm for Berry phase estimation. 
In Section~\ref{sec:dUQMA_characterization}, we give characterizations of the new complexity class \dUQMA. 
In Sections~\ref{sec:BQPcomp} and~\ref{sec:UQMA}, we show \BQP-completeness of BPE with guiding state and $\mathsf{UQMA} \cap \mathsf{co}\text{-}\mathsf{UQMA}$-completeness of BPE with energy threshold, respectively. In Section~\ref{sec:P^PGQMAcomp}, we prove Corollary~\ref{thm:P^PGQMAcomp}, establishing $\mathsf{P}^{\mathsf{UQMA} \cap \mathsf{co}\text{-}\mathsf{UQMA}\mathsf{[log]}}$-hardness and containment in $\mathsf{P}^{\mathsf{PGQMA[log]}}$.

\section{Preliminaries}
\label{sec:premliminaries}

\subsection{\RED{Complexity Classes}}
Throughout this paper, we denote by \( \Pi_1 \) the projection onto \( \ket{1} \) in the first qubit—namely, the projector onto the accepting subspace:
\[
    \Pi_1 := \ket{1}\bra{1}_{\text{out}} \otimes I_{\text{other}},
\]
where the first qubit represents the output register of the verifier circuit, and \( I_{\text{other}} \) denotes the identity operator on the remaining qubits.

In this work, we consider promise problems within quantum complexity classes. In the following definitions, $l(\cdot)$ and $m(\cdot)$ denote some polynomials.
One of the fundamental classes is \(\BQP\), which characterizes decision problems solvable efficiently by a quantum computer with bounded error.
\begin{definition}[\BQP]
    A promise problem \(L = (L_{\text{yes}}, L_{\text{no}})\) is in \(\BQP\) if there exists a uniformly generated polynomial-size quantum circuit \(U_x\), acting on \(l(|x|)\) input qubits and \(m(|x|)\) ancilla qubits initialized to \(\ket{0^m}\), such that:
    \begin{itemize}
        \item \textbf{Completeness:} If \(x \in L_{\text{yes}}\), then $\left\| \Pi_1 U_x \ket{0^{l}} \otimes \ket{0^m} \right\|^2 \geq \frac{2}{3},$
        \item \textbf{Soundness:} If \(x \in L_{\text{no}}\), then $\left\| \Pi_1 U_x \ket{0^{l}} \otimes \ket{0^m} \right\|^2 \leq \frac{1}{3}.$
    \end{itemize}
\end{definition}
Another central class is \QMA\ (Quantum Merlin-Arthur), the quantum analogue of the classical class \NP. In \QMA, a quantum verifier receives a quantum witness and decides whether to accept or reject it using a polynomial-time quantum circuit with bounded error.
\begin{definition}[\QMA]
    A promise problem $L = (L_{\mathrm{yes}}, L_{\mathrm{no}})$ is in \QMA\ (Quantum Merlin-Arthur) if there exists a uniformly generated polynomial-size quantum circuit $U_x$, acting on $l(|x|)$ witness qubits and $m(|x|)$ ancilla qubits initialized to $\ket{0^m}$, such that:

    \begin{itemize}
        \item \textbf{Completeness:} If $x \in L_{\mathrm{yes}}$, then there exists a quantum witness $\ket{\psi}$ such that
              \[
                  \left\| \Pi_1 U_x \ket{\psi} \ket{0^m} \right\|^2 \geq \frac{2}{3}.
              \]

        \item \textbf{Soundness:} If $x \in L_{\mathrm{no}}$, then for all quantum states $\ket{\psi}$,
              \[
                  \left\| \Pi_1 U_x \ket{\psi} \ket{0^m} \right\|^2 \leq \frac{1}{3}.
              \]
    \end{itemize}
\end{definition}
Another key complexity class relevant to our work is \UQMA\ (Unique Quantum Merlin-Arthur), a refinement of \QMA\ in which the verifier is promised that there exists a unique accepting quantum witness (up to global phase). This condition aligns with our setting, where the ground state of the Hamiltonian is non-degenerate, and plays an important role in the complexity analysis of Berry phase estimation without a guiding state. Unlike \QMA, which may admit multiple orthogonal accepting witnesses, \UQMA\ requires that only one such witness achieves high acceptance probability.
\begin{definition}[\RED{\UQMA}~\cite{aharonov2022pursuit}]
    A promise problem \(L = (L_{\text{yes}}, L_{\text{no}})\) is in \(\UQMA\) if there exists a uniformly generated polynomial-size quantum circuit \(U_x\), acting on \(l(x)\) input qubits and \(m(x)\) ancilla qubits initialized to \(\ket{0^m}\), such that:
    \begin{itemize}
        \item \textbf{Completeness:} If \(x \in L_{\text{yes}}\), there exists a quantum witness \(\ket{\psi}\) such that
              \[
                  \left\| \Pi_1 U_x \ket{\psi} \otimes \ket{0^m} \right\|^2 \geq \frac{2}{3},
              \]
              and for all orthogonal states \(\ket{\phi} \perp \ket{\psi}\),
              \[
                  \left\| \Pi_1 U_x \ket{\phi} \otimes \ket{0^m} \right\|^2 \leq \frac{1}{3}.
              \]
        \item \textbf{Soundness:} If \(x \in L_{\text{no}}\), then for all states \(\ket{\psi}\),
              \[
                  \left\| \Pi_1 U_x \ket{\psi} \otimes \ket{0^m} \right\|^2 \leq \frac{1}{3}.
              \]
    \end{itemize}
\end{definition}
\RED{
We also provide a formal definition of $\mathsf{UQMA}\cap\mathsf{co}\text{-}\mathsf{UQMA}$
\begin{definition}[$\mathsf{UQMA}\cap\mathsf{co}\text{-}\mathsf{UQMA}$]\label{def:UQMA-co-UQMA}
A promise problem $L = (L_{\text{yes}}, L_{\text{no}})$ is in $\mathsf{UQMA}\cap\mathsf{co}\text{-}\mathsf{UQMA}$ if there exist uniformly generated polynomial-size quantum circuits $U_{\text{yes},x}, U_{\text{no},x}$ acting on $l(x)$ input qubits and $m(x)$ ancilla qubits initialized to $\ket{0^m}$ such that:
\begin{itemize}
    \item If $x\in L_{\mathrm{yes}}$, then $\exists\ket{\psi_\mathrm{yes}}$: $\left\| \Pi_1 U_{\text{yes},x} \ket{\psi_\mathrm{yes}} \otimes \ket{0^m} \right\|^2 \geq \frac{2}{3}$ and $\forall \ket{\phi}\perp \ket{\psi_\mathrm{yes}}$: $\left\| \Pi_1 U_{\text{yes},x} \ket{\phi} \otimes \ket{0^m} \right\|^2 \leq \frac{1}{3}$ and also $\forall \ket{\gamma}$: $\left\| \Pi_1 U_{\text{no},x} \ket{\gamma} \otimes \ket{0^m} \right\|^2 \leq \frac{1}{3}$.
    \item If $x\in L_{\mathrm{no}}$, then $\exists\ket{\psi_\mathrm{no}}$: $\left\| \Pi_1 U_{\text{no},x} \ket{\psi_\mathrm{no}} \otimes \ket{0^m} \right\|^2 \geq \frac{2}{3}$ and $\forall \ket{\phi}\perp \ket{\psi_\mathrm{no}}$: $\left\| \Pi_1 U_{\text{no},x} \ket{\phi} \otimes \ket{0^m} \right\|^2 \leq \frac{1}{3}$ and also $\forall \ket{\gamma}$: $\left\| \Pi_1 U_{\text{yes},x} \ket{\gamma} \otimes \ket{0^m} \right\|^2 \leq \frac{1}{3}$.
\end{itemize}
\end{definition}
}

{
Next, we introduce \textsf{PGQMA} (Polynomially-Gapped QMA),
which was introduced in Ref.~\cite{aharonov2022pursuit} to understand \QMA-like problems where not only there is a promise gap between acceptance probabilities of YES and NO instances, but there is also a guaranteed acceptance gap between the witness with the highest acceptance probability and the witness with the second highest probability in both YES and NO cases. 
Ref.~\cite{aharonov2022pursuit} showed that the complexity of determining the ground state energy of Hamiltonians with inverse-polynomial spectral gap is captured by \textsf{PGQMA} and that \UQMA\ is equivalent to \textsf{PGQMA} under randomized reductions.
}

{
To introduce \textsf{PGQMA}, we define the accept operator $\mathcal{Q} = (I\otimes \bra{0^m}) U_x^\dagger \Pi_1 U_x (I\otimes \ket{0^m})$,
and let the eigenvalues of $\mathcal{Q}$ be $\lambda_0(\mathcal{Q})\geq \lambda_1(\mathcal{Q}) \geq \dots$.
Notably, the acceptance probability of the witness state is written as the convex combination of these eigenvalues, and QMA can alternatively be defined using the completeness of $\lambda_0(\mathcal{Q})\geq 2/3$ and the soundness of $\lambda_0(\mathcal{Q})\leq 1/3$.
Using this property of the accept operator, we can define \textsf{PGQMA} as follows.
}

\begin{definition}[\textsf{PGQMA} {\cite{aharonov2022pursuit}}]
\label{def:pgqma}
    A promise problem $L = (L_{\mathrm{yes}}, L_{\mathrm{no}})$ is in \textsf{PGQMA} (Polynomially-Gapped QMA) if there exists a uniformly generated polynomial-size quantum circuit $U_x$, acting on $l(|x|)$ input qubits and $m(|x|)$ ancilla qubits initialized to $\ket{0^m}$, and a polynomially small gap function $\Delta(|x|) \geq 1/\mathrm{poly}(|x|)$ such that:

    \begin{itemize}
        \item \textbf{Completeness:} If $x \in L_{\mathrm{yes}}$, then 
        $\lambda_0(\mathcal{Q})\geq 2/3$ and $(\lambda_0(\mathcal{Q}) - \lambda_1(\mathcal{Q})) \geq \Delta(|x|)$.

        \item \textbf{Soundness:} If $x \in L_{\mathrm{no}}$, then 
        $\lambda_0(\mathcal{Q})\leq 1/3$ and $(\lambda_0(\mathcal{Q}) - \lambda_1(\mathcal{Q})) \geq \Delta(|x|)$.
    \end{itemize}
\end{definition}

\RED{Ref.~\cite{aharonov2022pursuit} showed that estimating the ground state energy of Hamiltonians promised to have an inverse-polynomial spectral gap is \textsf{PGQMA}-complete.}

\subsection{Berry Phase}
Let $\{H(\boldsymbol{\lambda})\}_{\boldsymbol{\lambda}\in\mathcal{M}}$ be a smooth, $k$-local Hamiltonian family depending on real control parameters $\boldsymbol{\lambda}\in\mathcal{M}\subset\mathbb{R}^{m}$. Assume that the instantaneous ground state \( |\psi(\boldsymbol{\lambda})\rangle \) of $H(\boldsymbol{\lambda})$ is non-degenerate for every \( \boldsymbol{\lambda} \), and let \( \Delta_{\mathrm{min}} \) denote the minimum spectral gap between the ground state and the excited states over the entire parameter space \( \mathcal{M} \). Let $U(T)$ be the adiabatic time-evolution operator
\[U(T)=\mathcal{T} \exp \left(-i \int_0^T H(\boldsymbol{\lambda}(t)) d t\right),\]
where $\mathcal{T}$ denotes time ordering and $t\mapsto\boldsymbol{\lambda}(t)$ parametrises the closed loop
$\mathcal{C}$ with $\boldsymbol{\lambda}(T)=\boldsymbol{\lambda}(0)$. The adiabatic theorem guarantees that the ground state returns to itself up to a phase {with $T\rightarrow \infty$}:

\[
    U(T)
    |\psi(\boldsymbol{\lambda}(0))\rangle
    =
    e^{-i\theta_D}e^{i\theta_B}
    |\psi(\boldsymbol{\lambda}(0))\rangle.
\]
Here $\theta_D\in\mathbb{R}$ is the dynamical phase and $\theta_B\in\mathbb{R}$ is the Berry phase. The dynamical phase is given by
\[
    \theta_D
    =\int_{0}^{T} E_0(\boldsymbol{\lambda}(t))dt
\]
where $E_0(\boldsymbol{\lambda}(t))$ is the instantaneous ground-state energy at time $t$, and the Berry phase is given by
\[
    \theta_B
    = i\int _{0}^{T}\bra{\psi ( \lambda ( t))}\frac{\mathrm{d}}{\mathrm{d}t}\ket{\psi ( \lambda ( t))}\, \mathrm{d}t=  i \oint_{\mathcal{C}}
    \bra{\psi(\boldsymbol{\lambda})}
    \nabla_{\boldsymbol{\lambda}}
    \ket{\psi(\boldsymbol{\lambda})}
    \cdot \mathrm{d}\boldsymbol{\lambda},
\]
where $\nabla_{\!\boldsymbol{\lambda}}$ is the gradient with respect to the control parameters $\boldsymbol{\lambda}$. The first factor, the dynamical phase $\theta_D$, depends explicitly on the instantaneous
ground-state energy $E_0(t)$ and on the total runtime $T$; it therefore changes if we
speed up, slow down, or reverse the evolution.
In contrast, $\theta_B$ is the Berry phase, set solely by the shape of the closed path~$\mathcal{C}$ in parameter space and independent of the rate at which the loop is traversed.

For the remainder of this paper, we specialize in the
single-parameter case \(m=1\).
The control manifold is the circle
\(
S^{1}\cong[0,1)\bigl/(\lambda\sim\lambda+1)
\);
we use the normalized coordinate \(\lambda:=t/T\in[0,1)\) with the periodic
identification \(H(\lambda)=H(\lambda+1)\).
In this setting, the corresponding time-evolution operator becomes
\[
    U(T) = \mathcal{T}\exp\!\left(-i T \int_0^1 H(\lambda)\, \mathrm{d}\lambda\right),
\]
and the Berry phase becomes
\[
    \theta_B =
    \int_{0}^{1}
    i\bra{\psi(\lambda)}\frac{\mathrm{d}}{\mathrm{d}\lambda}\ket{\psi(\lambda)}\,\mathrm{d}\lambda\RED{.}
\]

The time evolution is considered adiabatic if the evolution is sufficiently slow. According to \cite{ambainis2004elementary}, this is ensured when the runtime $T$ satisfies

\begin{equation}\label{eq:adia_condition}
    T\geq \frac{10^5}{\delta_{\mathrm{adia}}^2}\cdot\max \left\{
    \frac{\|\mathrm{d}H/\mathrm{d}\lambda\|^3}{\Delta_{\mathrm{min}}^4}, \frac{\|\mathrm{d}H/\mathrm{d}\lambda\|\cdot \|\mathrm{d}^2H/\mathrm{d}^2\lambda\|}{\Delta^3_{\mathrm{min}}}
    \right\},
\end{equation}
where $\left\|\cdot\right\|$ denotes the standard operator norm and $\delta_{\mathrm{adia}}$ is the allowable adiabatic error i.e., $\|U(T)\ket{\psi(0)}-\ket{\psi(1)}\|\le \delta_{\mathrm{adia}}$. Therefore, if $\left\|\frac{dH}{d\lambda}\right\|$ and $\left\|\frac{d^2H}{d^2\lambda}\right\|$ are bounded by $\mathrm{poly}(n)$ and the minimum spectral gap is bounded as $\Delta_{\mathrm{min}} \ge 1/\mathrm{poly}(n)$, it suffices to take $T = \mathrm{poly}(n,1/\delta_{\mathrm{adia}})$.

\section{Problem definitions and main results}
\label{sec:main}

In this section, we present the problem definitions of the Berry phase estimation and our main results.

In order to define the Berry phase estimation problems, we first introduce notations for the intervals in $[0,2\pi)$.
{For $a,b \in [0, 2\pi)$, let us
introduce the following notation:
$$
    [a,b]_{2\pi}:=
    \begin{cases}
        [a,b]             & \text{ if \ } b\geq a, \\
        [0,b]\cup[a,2\pi) & \text{ if \ } b< a.
    \end{cases}
$$
\RED{Throughout, the arguments of $[\cdot,\cdot]_{2\pi}$ are understood modulo $2\pi$; that is, a real number is identified with its representative in $[0,2\pi)$, so that expressions such as $a-\delta$ and $b+\delta$ are always defined.}
With this definition, \RED{and under the input condition $2\delta<|b-a|<2\pi-2\delta$ imposed in Definitions~\ref{def:BPE_guided}, \ref{def:BPE_withEth} and \ref{def:BPE_withoutEth},} $[a+\delta,b-\delta]_{2\pi}$ and $[b+\delta,a-\delta]_{2\pi}$ are \RED{nonempty} intervals in $[0,2\pi)$ that are separated from each other by $2\delta$. Figure~\ref{fig:intervals} shows such intervals in the case $b>a$.
\RED{Here the upper bound $|b-a|<2\pi-2\delta$ is needed because $[a+\delta,b-\delta]_{2\pi}$ and $[b+\delta,a-\delta]_{2\pi}$ are the two complementary arcs cut at $a$ and $b$, each shrunk by $\delta$ at both ends; their lengths are $|b-a|-2\delta$ and $2\pi-|b-a|-2\delta$ up to the orientation of the pair, so both bounds are required for the two arcs to be nonempty.}
}

\begin{figure}
    \centering
    \includegraphics[width=0.5\linewidth]{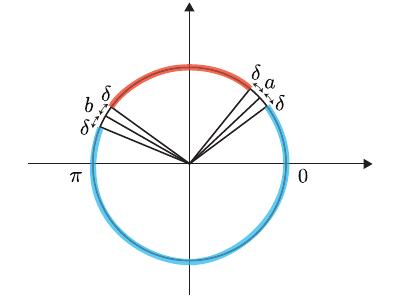}
    \caption{The Berry phase estimation problem is to decide whether $\theta_B$ is in the red region or the blue region in $[0,2\pi)$ for the given $a,b$.}
    \label{fig:intervals}
\end{figure}

Then, the Berry phase estimation problems can be formally defined as follows.
First, we define a problem in which an approximation of the ground state is known as an input of the problem.

\begin{definition}[Berry phase estimation with a guiding state]
    \label{def:BPE_guided}
    \ \\
    \textbf{Input:}
    A family of $k$-local Hamiltonians $\{H(\lambda)\}_\lambda$ parameterized by $\lambda\in [0,1]$ on $n$-qubit {given as a $poly(n)$-size description of a function of $\lambda$}, real numbers $\delta,
        \Delta_{\mathrm{min}}>1/\mathrm{poly}(n)$,
    {$1/\mathrm{poly}(n)\leq \gamma \leq  1-1/\mathrm{poly}(n)$,}
    and
    {$a,b\in [0,2\pi)$}, and a $\mathrm{poly}(n)$-size description of a normalized quantum state $\ket{c}$ s.t. \RED{$2\delta<|b-a|<2\pi-2\delta$}.\\
    \RED{\textbf{Promise: }}
    \begin{itemize}
        \item \RED{($\mathcal{C}1$):} $H(0)=H(1)$, and for all $\lambda\in [0,1]$ the following holds: (1) $\|H(\lambda)\|\in \mathrm{poly}(n)$,  (2) $H(\lambda)$ has a non-degenerate ground state, (3) the spectral gap of $\{H(\lambda)\}_\lambda$ is lower bounded by $\Delta_{\mathrm{min}}$, (4) $\|\mathrm{d}H/\mathrm{d}\lambda\|,\|\mathrm{d}^2H/\mathrm{d}\lambda^2\|\in \mathrm{poly}(n)$.
        \item \RED{($\mathcal{C}2$):}
              {$|\braket{\psi(0)|c}|^2\geq \gamma$}, where $\ket{\psi(0)}$ is a normalized ground state of $H(0)$.
        \item
              The Berry phase $\theta_B =
                  \int_{0}^{1}
                  i\bra{\psi(\lambda)}\frac{\mathrm{d}}{\mathrm{d}\lambda}\ket{\psi(\lambda)}
                  \mathrm{d}\lambda$  is either in  $[a+\delta,b-\delta]_{2\pi}$ or $[b+\delta,a-\delta]_{2\pi}$.
    \end{itemize}
    \textbf{Output:}
    {1 if $\theta_B\in[a+\delta,b-\delta]_{2\pi}$ and output 0 if $\theta_B\in[b+\delta,a-\delta]_{2\pi}$.}
\end{definition}
\RED{
Following the standard guided local Hamiltonian setting~\cite{gharibian2022dequantizing,cade2022improved}, the polynomial-size classical description of the guiding state \(\ket{c}\) is understood as a description from which a quantum computer can efficiently and coherently prepare \(\ket{c}\).
}
\RED{
Here, the guiding state is used to specify and access the relevant initial ground state of \(H(0)\) through the overlap condition ($\mathcal{C}$2). In this sense, it plays the role of a trial or reference state that has non-negligible support on the eigenstate whose Berry phase is to be estimated. The condition ($\mathcal{C}$2) concerns only the magnitude of the inner product with the ground state, and does not impose a prescribed absolute phase relation between the guiding state and the eigenstate. This is consistent with the fact that the phase convention of an eigenstate is a gauge choice, while the Berry phase associated with a closed loop is gauge invariant modulo \(2\pi\). Thus, the guiding state is not assumed to encode the Berry phase itself; it is used only to single out the eigenstate whose geometric phase is studied.
}

\begin{theorem}\label{thm:BQPcomp}
    Berry phase estimation problem with a known guiding state is \BQP-complete.
    \RED{The hardness holds even when restricted to \(5\)-local Hamiltonian families on qubits with polynomially bounded interaction strengths.}
\end{theorem}

The full proof of Theorem~\ref{thm:BQPcomp} is provided in Section~\ref{sec:BQPcomp}, which is divided into two parts: containment in \BQP\ and \BQP-hardness.
\RED{
We remark that the hardness result holds even when the guiding state is a classical subset state with $\gamma = 1 - \Omega(1/\mathrm{poly}(n))$. 
}
In Section~\ref{sec:BQP_containment}, we show that the Berry phase estimation problem---when a suitable guiding state is provided---can be solved efficiently by a quantum computer. That is, the problem lies in the complexity class \textsf{BQP}. Our argument builds on the algorithm introduced in Section~\ref{sec:improved_algorithm}, which estimates the Berry phase \( \theta_B \in [0, 2\pi) \).
Our result on the Berry phase estimation algorithm can be stated as follows.

\begin{theorem}[Quantum algorithm for Berry phase estimation]
    \label{thm:BPE_algorithm}
    Under conditions (\(\mathcal{C}1\))-(\(\mathcal{C}2\)) of Definition~\ref{def:BPE_guided}, for any additive accuracy
    $\varepsilon_B \ge 1/\mathrm{poly}(n)$ and failure probability $\eta \in (0,1/3)$,
    there exists a quantum algorithm that outputs an estimate $\hat\theta$ of the Berry phase
    $\theta_B$ satisfying $|\hat\theta-\theta_B|\le \varepsilon_B \pmod{2\pi}$ and runs in $\mathcal{O}(\mathrm{poly}\!\left(n\right)\cdot \frac{1}{\varepsilon_B}\log{(1/\eta))}$-time.
\end{theorem}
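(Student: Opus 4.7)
}
The plan is to combine three ingredients---guided ground-state preparation, two adiabatic evolutions under rescaled Hamiltonians, and quantum phase estimation (QPE)---so as to cancel the dynamical phase and read off $\theta_B\bmod 2\pi$ directly. The key observation is that rescaling $H(\lambda)\mapsto \alpha H(\lambda)$ leaves the instantaneous eigenvectors (and hence $\theta_B$) unchanged but multiplies the dynamical phase by $\alpha$, so two adiabatic runs at different energy scales suffice to separate the two phases by a classical linear combination, with no restriction to $[0,\pi)$ of the kind imposed by the forward--backward trick of Murta et al.~\cite{murta2020berry}.

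Concretely, I would first prepare an approximation $\ket{\tilde\psi(0)}$ of $\ket{\psi(0)}$ from $\ket{c}$ via standard ground-state filtering~\cite{ge2019faster,lin2020near}; since the overlap $\gamma$ and the gap $\Delta_{\min}$ are both at least $1/\mathrm{poly}(n)$, this step runs in $\mathrm{poly}(n)\cdot\log(1/\varepsilon_{\mathrm{gs}})$ time and achieves trace-distance error $\varepsilon_{\mathrm{gs}}$. For each $\alpha\in\{1,2\}$, I would then implement the controlled adiabatic evolution
\[
U_\alpha \;=\; \mathcal{T}\exp\!\left(-iT\int_0^1 \alpha H(\lambda)\,d\lambda\right)
\]
via Trotterized time-dependent simulation; taking $T=\mathrm{poly}(n,1/\delta_{\mathrm{adia}})$ as dictated by Eq.~\eqref{eq:adia_condition} ensures $U_\alpha\ket{\psi(0)}=e^{i(\theta_B-\alpha\theta_D)}\ket{\psi(0)}+O(\delta_{\mathrm{adia}})$. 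An $m$-bit QPE with $m=O(\log(1/\varepsilon_B))$ on controlled-$U_\alpha$ initialized at $\ket{\tilde\psi(0)}$, median-boosted over $O(\log(1/\eta))$ repetitions, then produces estimates $\hat\phi_\alpha$ of $(\theta_B-\alpha\theta_D)\bmod 2\pi$ with additive precision $\varepsilon_B/4$ and failure probability at most $\eta/2$. The algorithm outputs $\hat\theta := (2\hat\phi_1-\hat\phi_2)\bmod 2\pi$, which analytically equals $\theta_B\bmod 2\pi$, and a union bound gives $|\hat\theta-\theta_B|\le \varepsilon_B\pmod{2\pi}$ with probability at least $1-\eta$.

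The main obstacle I anticipate is bookkeeping the three sources of error---ground-state preparation ($\varepsilon_{\mathrm{gs}}$), adiabatic leakage ($\delta_{\mathrm{adia}}$), and QPE discretization ($\sim 2^{-m}$)---so that their combined effect, amplified by the factor of $2$ in $2\hat\phi_1-\hat\phi_2$, stays below $\varepsilon_B$. Since filtering and QPE depend only polylogarithmically on their accuracy parameters, and the adiabatic theorem requires $T$ polynomial in $1/\delta_{\mathrm{adia}}$, the required settings $\varepsilon_{\mathrm{gs}},\delta_{\mathrm{adia}}\le \varepsilon_B/\mathrm{poly}(n)$ all fit inside the target runtime $\mathcal{O}(\mathrm{poly}(n)\cdot \varepsilon_B^{-1}\log(1/\eta))$, where the $1/\varepsilon_B$ originates from the QPE query cost. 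A secondary subtlety is that QPE on a near-eigenstate returns a convex combination of eigenphases rather than a single value; I would control this by the standard eigenstate-perturbation analysis of QPE, bounding the leakage contribution by $\varepsilon_{\mathrm{gs}}+\delta_{\mathrm{adia}}$ and folding it into the same error budget.
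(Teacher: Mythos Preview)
Your proposal is correct and follows essentially the same strategy as the paper's Section~\ref{sec:improved_algorithm}: separate $\theta_B$ from $\theta_D$ by running two adiabatic evolutions at different scales and linearly combining the QPE outcomes, with ground-state preparation from $\ket{c}$ via \cite{ge2019faster,lin2020near}. Your Hamiltonian rescaling $H\mapsto\alpha H$ is equivalent to the paper's runtime rescaling $T\mapsto\alpha T$, and your choice $\alpha=2$ with the direct estimator $\hat\theta=(2\hat\phi_1-\hat\phi_2)\bmod 2\pi$ is precisely the paper's reconstruction specialised to the case $1/(\alpha-1)\in\mathbb{Z}_{>0}$, where the integer coefficients make the modular-arithmetic unwrapping automatic (the paper additionally treats non-integer $1/(\alpha-1)$ via a careful choice of $\alpha$ and an explicit unwrapping bound, but this generality is not needed for the theorem as stated).
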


Our algorithm overcomes the limitations of the previous quantum algorithm for the Berry phase estimation of~\cite{murta2020berry}, which is reviewed in Appendix~\ref{sec:berry_algorithms}, because our algorithm can estimate the Berry phase without requiring time-reversal symmetry or limiting the phase to a specific range.

In Section~\ref{sec:BQP_hardness}, we prove the reverse direction: that estimating the Berry phase with a known guiding state is at least as hard as solving any problem in \textsf{BQP}. To do this, we construct a family of quantum systems where the Berry phase depends sharply on the outcome of a quantum computation. The phase differs significantly between YES and NO instances, so by estimating it, one can decide the original problem. This establishes the \textsf{BQP}-hardness of the given problem.

Next, we introduce a Berry phase estimation problem in which an initial ground state is {\it unknown}. 
In this problem, we first consider a setting where we are given an energy threshold that separates the ground state energy and the first excited state energy for the initial Hamiltonian.
\RED{Such a threshold is natural spectral prior information in Hamiltonian problems, analogous to the energy thresholds appearing in local Hamiltonian formulations. In the present BPE setting, it is useful because it allows a verifier to test whether a proposed witness lies in the relevant ground-state subspace of \(H(0)\), before estimating the Berry phase associated with that eigenstate.}

\begin{definition}[Berry phase estimation with energy threshold]
    \label{def:BPE_withEth}
    \ \\
    \textbf{Input:}
    A family of $k$-local Hamiltonians $\{H(\lambda)\}_\lambda$ parameterized by $\lambda\in [0,1]$ on $n$-qubit and real numbers $a,b\in [0,2\pi)$, $\delta,
        \Delta_{\mathrm{min}}>1/\mathrm{poly}(n)$,
    $E_{\text{th}}$
    s.t. \RED{$2\delta<|b-a|<2\pi-2\delta$}.\\
    \RED{\textbf{Promise: }}
    \begin{itemize}
        \item \RED{($\mathcal{C}1$) of Definition~\ref{def:BPE_guided} holds.}
        \item  {The energy threshold $E_{\text{th}}$ satisfies $E_0 + \Delta_{\mathrm{min}}/4 \leq E_{\text{th}} \leq E_1 - \Delta_{\mathrm{min}}/4$, where $E_0$ and $E_1$ are the ground state energy and the first excited state energy of $H(0)$.}
        \item The Berry phase $\theta_B =
        \int_{0}^{1}
        i\bra{\psi(\lambda)}\frac{\mathrm{d}}{\mathrm{d}\lambda}\ket{\psi(\lambda)}
        \mathrm{d}\lambda$  is either in  $[a+\delta,b-\delta]_{2\pi}$ or $[b+\delta,a-\delta]_{2\pi}$.
    \end{itemize}
    \textbf{Output:}
    {1 if $\theta_B\in[a+\delta,b-\delta]_{2\pi}$ and output 0 if $\theta_B\in[b+\delta,a-\delta]_{2\pi}$.}
\end{definition}

\noindent
The following theorem completely characterizes the computational complexity of the Berry phase estimation problem with an energy threshold.
\begin{theorem}
    \label{thm:UQMAcomp}
    Berry phase estimation with energy threshold is \RED{$\UQMA \cap \mathsf{co}\text{-}\UQMA$-complete}.
    \RED{The hardness construction again uses \(5\)-local Hamiltonian families on qubits with polynomially bounded interaction strengths.}
\end{theorem}
\noindent 
\RED{
In Section~\ref{sec:UQMA_containment}, we prove the containment in $\UQMA \cap \mathsf{co}\text{-}\UQMA$. 
In Section~\ref{sec:UQMA_hardness}, we prove the $\UQMA \cap \mathsf{co}\text{-}\UQMA$-hardness.
To show the hardness, we introduce another representation of $\UQMA \cap \mathsf{co}\text{-}\UQMA$ with a single verifier, which we call $\dUQMA$.
We note that, remarkably, BPE with energy threshold is the first natural problem complete for $\UQMA \cap \mathsf{co}\text{-}\UQMA$.
}
  
Finally, we consider the general problem of estimating the Berry phase without an energy threshold.

\begin{definition}[Berry phase estimation]
    \label{def:BPE_withoutEth}
    \ \\
    \textbf{Input:}
    A family of $k$-local Hamiltonians $\{H(\lambda)\}_\lambda$ parameterized by $\lambda\in [0,1]$ on $n$-qubit, and real numbers $a,b\in [0,2\pi)$, $\delta,
        \Delta_{\mathrm{min}}>1/\mathrm{poly}(n)$
    s.t. \RED{$2\delta<|b-a|<2\pi-2\delta$}, and {($\mathcal{C}1$) of Definition~\ref{def:BPE_guided} holds}. \\
    \textbf{Promise: }
    $\theta_B =
        \int_{0}^{1}
        i\bra{\psi(\lambda)}\frac{\mathrm{d}}{\mathrm{d}\lambda}\ket{\psi(\lambda)}
        \mathrm{d}\lambda$  is either in  $[a+\delta,b-\delta]_{2\pi}$ or $[b+\delta,a-\delta]_{2\pi}$.\\
    \textbf{Output:}
    output 1 if $\RED{\theta_B} \in [a+\delta,b-\delta]_{2\pi}$ and output 0 if $\RED{\theta_B} \in [b+\delta,a-\delta]_{2\pi}$.
\end{definition}
\begin{corollary}
    \label{thm:P^PGQMAcomp}
    Berry phase estimation is contained in $\mathsf{P}^{\mathsf{PGQMA[log]}}$ and $\mathsf{P}^{\RED{\mathsf{UQMA}\cap\mathsf{co}\text{-}\mathsf{UQMA}}\mathsf{[log]}}$-{hard}
    under polynomial-time truth-table reductions.
    \RED{The hardness part uses the same \(5\)-local qubit Hamiltonian construction as Theorem~\ref{thm:UQMAcomp}.}
\end{corollary}
\noindent
In Section~\ref{sec:P^PGQMA_containment}, we prove the containment in $\mathsf{P}^{\mathsf{PGQMA[log]}}$. In Section~\ref{sec:P^dUQMA_hardness}, we prove the $\mathsf{P}^{\RED{\mathsf{UQMA}\cap\mathsf{co}\text{-}\mathsf{UQMA}}\mathsf{[log]}}$-hardness.

\section{New Quantum Algorithm for Berry phase Estimation\label{sec:improved_algorithm}}

In this section, we present an improved quantum algorithm for estimating the Berry phase \( \theta_B \in [0, 2\pi) \) without relying on time-reversal symmetry. This addresses a key limitation of prior work—such as the algorithm by Murta et al.~\cite{murta2020berry}—which restricts the estimable range of \( \theta_B \) to \( [0, \pi) \) due to a phase-doubling technique. 
For the convenience of readers, we give an overview of the algorithm in Ref.~\cite{murta2020berry} in Appendix~\ref{sec:berry_algorithms}. 

Our central idea is to compare two adiabatic evolutions of the same closed Hamiltonian path,
but executed with different total runtimes.
Increasing the runtime by a factor $\alpha>1$ leaves the geometric (Berry) phase $\theta_B$ unchanged,
while it multiplies the dynamical phase $\theta_D$ by $\alpha$.
By running quantum phase estimation on the adiabatic evolution for both runtime choices and combining the
measurement outcomes, we can cancel the dynamical contribution and recover the Berry phase.
This runtime-scaling technique enables the extraction of $\theta_B$ modulo $2\pi$ over the full interval $[0,2\pi)$, without relying on symmetry assumptions such as time-reversal symmetry.
            \begin{figure}[h]
                \centering
                \begin{quantikz}[row sep=0.6cm, column sep=0.35cm]
                    \lstick[wires=1]{$\ket{+^m}$}  & \ctrl{1} & \ctrl{1} & \qw & \ctrl{1} & \gate{QFT^{-1}} & \meter{} &  \rstick{$\theta_B \RED{-}\theta_D$} \\
                    \lstick{$\ket{\psi(0)}$}  & \gate{U(T)} & \gate{U(T)^2} & \dots & \gate{U(T)^{2^{m-1}}} & \qw & \qw & \qw\\[1.0cm]
                    \lstick[wires=1]{$\ket{+^m}$} & \ctrl{1} & \ctrl{1} & \qw & \ctrl{1} & \gate{QFT^{-1}} & \meter{}  & \rstick{$\theta_B \RED{-} \alpha\theta_D$}\\
                    \lstick{$\ket{\psi(0)}$}  & \gate{U(\alpha T)} & \gate{U(\alpha T)^2} & \dots & \gate{U(\alpha T)^{2^{m-1}}} & \qw & \qw  & \qw
                \end{quantikz}
                \caption{
                    Schematic of the Berry phase estimation algorithm.
                    Two independent quantum phase estimation (QPE) procedures are run on the adiabatic propagators
                    $U(T)$ and $U(\alpha T)$, each with input $\ket{\psi(0)}$ prepared from the guiding state $\ket{c}$.
                    The outcomes $m_T$ and $m_{\alpha T}$ encode the phases
                    $\varphi_1 \equiv \theta_B \RED{-} \theta_D$ and
                    $\varphi_\alpha \equiv \theta_B \RED{-} \alpha\theta_D \pmod{2\pi}$,
                    respectively. These classical outputs are then combined in post-processing to cancel the dynamical phase and reconstruct the Berry phase $\theta_B$.}
                \label{fig:twoQPE}
            \end{figure}
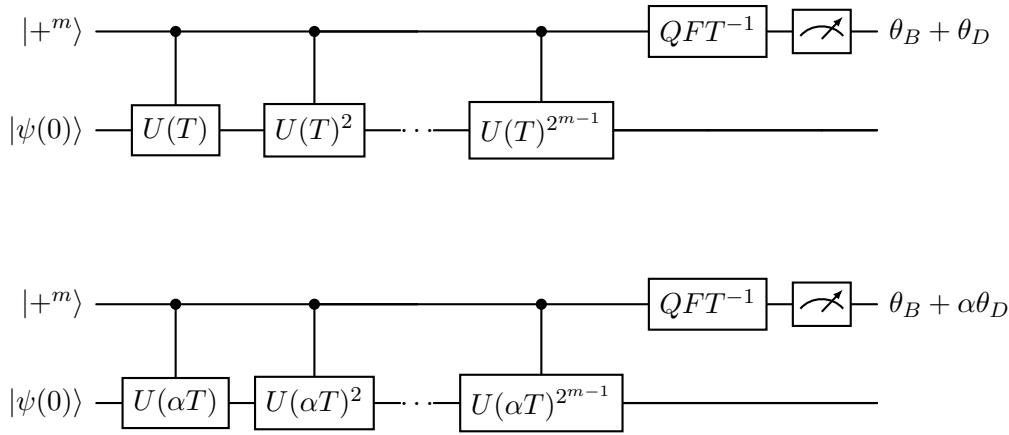
            \\[10pt]
    \noindent\textbf{Algorithm Description}
    \\
    The algorithm is defined under the assumptions of Definition~\ref{def:BPE_guided}.

    For any target additive accuracy $\varepsilon_B \ge 1/\mathrm{poly}(n)$ and target
    failure probability $\eta \in (0,1/3)$, we assume that time evolutions of local
    Hamiltonians, such as $e^{-iHt}$ and their adiabatic counterparts, can be implemented
    efficiently; in the \textsf{BQP} setting, this is without loss of generality, since
    Hamiltonian simulation achieves exponentially precise approximations at polynomial cost,
    well within the accuracy requirement $\varepsilon_B$.

    We also assume access to an efficient ground-state preparation primitive that maps the guiding
    state $\ket{c}$ to the ground state $\ket{\psi(0)}$ of $H(0)$ with constant success probability.
    This is without loss of generality under our gap-and-overlap promises: by the near-optimal method of
    Lin and Tong~\cite{lin2020near} (polynomial filtering via QSP with block-encodings and a binary-search
    over energies), one can prepare $\ket{\psi(0)}$ from $\ket{c}$ in $\mathrm{poly}(n)$ time even without a
    priori knowledge of $E_0$; The preparation error and the failure probability can be suppressed exponentially with polynomial overhead. Hence, for clarity of exposition, we treat the prepared input as exactly $\ket{\psi(0)}$ in the analysis that follows.

    \RED{
Under the gap-and-overlap promises, the initial ground state \(|\psi(0)\rangle\) can be prepared efficiently from the guiding state \(|c\rangle\). 
Indeed, using the ground-state preparation method of Lin and Tong~\cite{lin2020near}, implemented by polynomial filtering with a block-encoding of \(H(0)\), one can prepare a state that is inverse-polynomially close to \(|\psi(0)\rangle\) in polynomial time. 
This procedure does not require prior knowledge of the exact ground-state energy; the relevant energy window can be located to sufficient precision using the binary-search procedure of Ref.~\cite{lin2020near}. 
The preparation error and failure probability can be made inverse-polynomially small, or even exponentially small, with polynomial overhead. 
Thus, this preparation step is part of the BQP algorithm and does not constitute an additional oracle assumption. 
Possible global phases arising from separate ground-state preparation or QPE runs do not affect the algorithm, since such phases are physically unobservable and the Berry phase around a closed loop is gauge invariant modulo \(2\pi\). 
For notational simplicity, we treat the prepared initial state as exactly \(|\psi(0)\rangle\) in the analysis below.
}

    The output of the algorithm is an estimate $\hat\theta$ of the Berry phase $\theta_B$,
    guaranteed with an overall success probability of at least $1-\eta$ to satisfy
    \[
        |\hat\theta - \theta_B| \;\le\; \varepsilon_B \pmod{2\pi}.
    \]
\begin{enumerate}
        \item \textbf{Phase readout with runtime scaling parameters.}
              \\
             Fix the QPE failure parameter $\eta_{\mathrm{QPE}} \le 1-(1-\eta)^{1/4}$ and choose the adiabatic accuracy so that
\[
\RED{\delta_{\mathrm{adia}} \le \frac{\sqrt{\eta_{\mathrm{QPE}}}}{2^t-1}.}
\]
\RED{Here $t$ is the number of precision qubits used in the QPE procedure below. This stronger accuracy condition accounts for the accumulation of adiabatic error over all controlled powers used in QPE. For each QPE procedure, these controlled powers use at most $1+2+\cdots+2^{t-1}=2^t-1$ applications of the corresponding adiabatic unitary. Hence, by the triangle inequality, the accumulated adiabatic error is bounded by}
\[
\RED{\Delta_{\mathrm{adia}} := (2^t-1)\delta_{\mathrm{adia}} \le \sqrt{\eta_{\mathrm{QPE}}}.}
\] \RED{Since $t=\Theta(\log(1/\varepsilon_{\mathrm{ph}}))$ and $\varepsilon_{\mathrm{ph}}=1/\mathrm{poly}(n)$, we have $2^t=\mathrm{poly}(n)$, and hence this stronger adiabatic accuracy requirement remains inverse-polynomial.} Let $H_{\max}=\max_{s\in[0,1]}\|H(s)\|$; then $|\theta_D|\le T H_{\max}$ (since $|E_0(t)|\le\|H(t)\|$). Choose $\alpha$ either with $1/(\alpha-1)\in\mathbb{Z}_{>0}$, or to satisfy
              \begin{equation}\label{eq:alpha_choice}
                  \alpha = 1 + \frac{\pi}{T H_{\max} + 2 \varepsilon_B} > 1,
              \end{equation}
              which implies Eq.~(\ref{eq:unwrap_condition}) and ensures that the dynamical phase can be unwrapped correctly even if $1/(\alpha-1)$ is not an integer.

              Define $\varphi_1\equiv\theta_B\RED{-}\theta_D \pmod{2\pi}$ and $\varphi_\alpha\equiv\theta_B\RED{-}\alpha\theta_D \pmod{2\pi}$. Run QPE on $U(T)$ and on $U(\alpha T)$ with additive phase precision
              \[
                  \varepsilon_{\mathrm{ph}}=\frac{\alpha-1}{\alpha+1}\,\varepsilon_B = 1/\mathrm{poly}(n),
              \]
              using $t=\Theta(\log(1/\varepsilon_{\mathrm{ph}}))$ precision qubits. By $R=\Theta(\log(1/\eta_{\mathrm{QPE}}))$ independent repetitions with a circular median or by amplitude amplification, achieving per-run QPE failure at most $\eta_{\mathrm{QPE}}$. Then the outcomes $m_1, m_\alpha$ satisfy
             \[
    \Pr\big[\,|m_{1}-\varphi_1|\le \varepsilon_{\mathrm{ph}}\,\big]\ \ge\ (1-\eta_{\mathrm{QPE}})(1-\RED{\Delta_{\mathrm{adia}}}^{2})\ \ge\ (1-\eta_{\mathrm{QPE}})^2\ \ge\ \sqrt{1-\eta},
\]
              and likewise for $m_\alpha$.

              We re-prepare $\ket{\psi(0)}$ and use independent QPE randomness for the two estimates, so the two success events are independent. Therefore,
              \[
                  \Pr[\text{both succeed}]
                  \ \ge\ 1-\eta.
              \]
              The per-phase cost is
              \[
                  \operatorname{poly}(n)\cdot \Theta\!\left(\frac{T}{\varepsilon_{\mathrm{ph}}}\,\log\frac{1}{\eta_{\mathrm{QPE}}}\right)
                  =\operatorname{poly}(n)\cdot \Theta\!\left(\frac{T}{\varepsilon_{B}}\,\log\frac{1}{\eta}\right),
              \]
              so both phases are obtained to error $\varepsilon_{\mathrm{ph}}$ with total runtime
              $\operatorname{poly}(n,\,1/\varepsilon_B,\,\log(1/\eta))$.
        \item \textbf{Algebraic reconstruction.}
              \\
              For any interval $I=[a,a+2\pi)$ of length $2\pi$, define
              $
                  (x)_{I} := x - 2\pi \cdot
                  \Big\lfloor \frac{x-a}{2\pi} \Big\rfloor .
              $
              Then $(x)_{I}\in I$ and
              $(x)_{I}\equiv x \pmod{2\pi}$.
              With probability $\ge 1-\eta$, both $m_{1}$ and $m_{\alpha}$ lie within
              $\varepsilon_{\mathrm{ph}}$ of their true phases $\varphi_1,\varphi_\alpha$.
              On this event,
              \[
                  \big|\big(m_{1}\RED{-}m_{\alpha}\big)_{(-\pi,\pi]} - ((\alpha-1)\theta_D)_{(-\pi,\pi]}\big|\le 2\varepsilon_{\mathrm{ph}}.
              \]
              Define the estimators
              \[
                  \widehat{\theta}_D := \left(\frac{\big(m_{1}\RED{-}m_{\alpha}\big)_{(-\pi,\pi]}}{\alpha-1}\right)_{[0,2\pi)},
                  \qquad
                  \widehat{\theta}_B := \big(m_{1}\RED{+}\widehat{\theta}_D\big)_{[0,2\pi)}.
              \]
              Since $((\alpha-1)\theta_D)_{(-\pi,\pi]}\equiv (\alpha-1)\theta_D \pmod{2\pi}$, $\widehat{\theta}_D$ will correctly estimate the phase $\theta_D$ after rescaling if $\frac{1}{\alpha-1}$ is an integer. If we choose $\alpha$ such that $\tfrac{1}{\alpha - 1}$ is not integer, which mean satisfying (\ref{eq:alpha_choice}), then we get the following bound instead:
              \begin{equation}\label{eq:unwrap_condition}
                  |\theta_D(\alpha-1)| \le T H_{\max}\cdot \frac{\pi}{TH_{\max}+2\varepsilon_{B}} < \pi - 2\varepsilon_{\mathrm{ph}},
              \end{equation}
              allowing us to estimate the $\theta_D$ unambiguously after multiplying by $\frac{1}{\alpha-1}$. Then, on the same event,
              \[
                  \big|\widehat{\theta}_D - \theta_D\big|
                  \le \frac{2\varepsilon_{\mathrm{ph}}}{\alpha-1},
                  \qquad
                  \big|\widehat{\theta}_B - \theta_B\big|
                  \le \varepsilon_{\mathrm{ph}} + \frac{2\varepsilon_{\mathrm{ph}}}{\alpha-1}.
              \]
              Since $\varepsilon_{\mathrm{ph}}=\frac{\alpha-1}{\alpha+1}\varepsilon_B$, we obtain
              $|\widehat{\theta}_B-\theta_B|\le \varepsilon_B$. Repeating the procedure $O(\log(1/\eta))$ times and taking the circular median yields a success probability $\ge 1-\eta$.
        \item \textbf{Output.}
              Return $\widehat{\theta}_B$.
    \end{enumerate}

    \section{\RED{$\UQMA \cap \mathsf{co}\text{-}\UQMA$ with a single verifier}}
    \label{sec:dUQMA_characterization}
    \RED{
    In this section, we introduce a new complexity class $\dUQMA$ which has a single verifier.
    Then we show that $\dUQMA=\UQMA\cap \mathsf{co}\text{-}\UQMA$, which implies that $\dUQMA$ is an alternative formulation of $\UQMA \cap \mathsf{co}\text{-}\UQMA$.
    The original definition of $\UQMA\cap \mathsf{co}\text{-}\UQMA$ has two verification circuits, so it is not straightforward to embed the verification circuit into a Hamiltonian using the standard circuit-to-Hamiltonian construction.
    This alternative formulation $\dUQMA$ allows us to solve this difficulty, thereby proving the hardness result.
    }

    \RED{
    First, we provide a definition of $\dUQMA$ (doubly Unique QMA).
    }
    \RED{
    \begin{definition}[\dUQMA]\label{dfn:dUQMA}
        A promise problem $L = (L_{\mathrm{yes}}, L_{\mathrm{no}})$ is in \dUQMA$[c_\text{yes}, c_\text{no}, s]$ if there exists a uniformly generated polynomial-size quantum circuit $U_x$, acting on $l(|x|)$ input qubits and $m(|x|)$ ancilla qubits initialized to $\ket{0^m}$, such that:
        \begin{itemize}
            \item If $x \in L_{\text{yes}}$, there exists $\ket{w_{\text{yes}}}$ s.t. $\left\| (\ket{1}\bra{1}_{\text{out}1}\otimes \ket{1}\bra{1}_{\text{out}2}\otimes I) U_x \ket{w_{\text{yes}}} \otimes \ket{0^m} \right\|^2 \geq c_\text{yes}$  and for any $\ket{\phi} \perp \ket{w_{\text{yes}}}$, $\left\| (I\otimes \ket{1}\bra{1}_{\text{out2}}\otimes I) U_x \ket{\phi} \otimes \ket{0^m} \right\|^2 \leq s$,
            \item If $x \in L_{\text{no}}$, there exists $\ket{w_{\text{no}}}$ s.t. $\left\| (\ket{0}\bra{0}_{\text{out}1}\otimes \ket{1}\bra{1}_{\text{out}2}\otimes I) U_x \ket{w_{\text{no}}} \otimes \ket{0^m} \right\|^2 \geq c_\text{no}$ and for any $\ket{\phi} \perp \ket{w_{\text{no}}}$, $\left\| (I\otimes \ket{1}\bra{1}_{\text{out2}}\otimes I) U_x \ket{\phi} \otimes \ket{0^m} \right\|^2 \leq s$,
        \end{itemize}
        We define $\dUQMA \coloneqq \dUQMA[2/3, 2/3, 1/3]$.
    \end{definition}
    }

    \RED{
    Note that $\dUQMA$ has only a single verification circuit while the original definition of $\UQMA \cap \mathsf{co}\text{-}\UQMA$ in Definition~\ref{def:UQMA-co-UQMA} has two verification circuits.
    }

    \RED{Next we show that the failure probability of \dUQMA\ can be suppressed exponentially by increasing the witness register by only one flag qubit.}

    \begin{lemma}
        \label{thm:dUQMA_amplification_restatement}
        Let $n$ be an input size.
        Then
        \begin{equation*}
            \dUQMA[2/3, 2/3, 1/3] = \dUQMA[1-\mathcal{O}(2^{-n}), 1-\mathcal{O}(2^{-n}), \mathcal{O}(2^{-n})].
        \end{equation*}
    \end{lemma}
    \RED{
    \noindent\textbf{Proof overview.}
    The proof proceeds in three steps.
    First, we view a \dUQMA\ verifier as two ordinary unique-witness verifiers: accepting on output $(1,1)$ gives a \UQMA\ verifier, while accepting on output $(0,1)$ gives a $\mathsf{co}\text{-}\UQMA$ verifier.
    The only nontrivial point in this step is to prove a constant soundness gap for these derived verifiers.
    Second, we apply witness-preserving Marriott--Watrous amplification to these two derived verifiers, reducing both completeness and soundness errors to $\mathcal{O}(2^{-n})$ while preserving the unique accepting witnesses.
    Finally, we add one flag qubit to the witness register and use it to choose between the amplified \UQMA\ and $\mathsf{co}\text{-}\UQMA$ verifiers, thereby constructing a single amplified \dUQMA\ verifier.
    }

    \begin{proof}

 First, 
        $\dUQMA[1-\mathcal{O}(2^{-n}), 1-\mathcal{O}(2^{-n}), \mathcal{O}(2^{-n})] \subseteq \dUQMA[2/3, 2/3, 1/3]$ is trivial.
        Thus we prove $\dUQMA[2/3, 2/3, 1/3] \subseteq \dUQMA[1-\mathcal{O}(2^{-n}), 1-\mathcal{O}(2^{-n}), \mathcal{O}(2^{-n})]$ in the following.

        \RED{
         \proofpara{Extracting ordinary unique-witness verifiers}
        Let a promise problem $L=(L_\text{yes}, L_\text{no}) \in \dUQMA[2/3, 2/3, 1/3]$.
        We observe that $\dUQMA[2/3, 2/3, 1/3]$ verifier can be used for $\UQMA$ and $\mathsf{co}\text{-}\UQMA$ verifiers with $2/3$ completeness and $5/8$ soundness.
        In fact, if we decide to accept when the output qubits are $(\text{out}_1,\text{out}_2)=(1,1)$ (resp. $(0,1)$), then $\dUQMA[2/3, 2/3, 1/3]$ can be seen as $\UQMA$ (resp. $\mathsf{co}\text{-}\UQMA$) with $2/3$ completeness and $5/8$ soundness.
        The completeness conditions for these derived \UQMA\ and $\mathsf{co}\text{-}\UQMA$ verifiers follow immediately from Definition~\ref{dfn:dUQMA}. We therefore focus on proving soundness for the derived \UQMA\ verifier; the $\mathsf{co}\text{-}\UQMA$ case follows by symmetry.
        Let $x \in L_\text{no}$ and $\Pi_{ab}=\ket{a}\bra{a}_{\text{out}1}\otimes \ket{b}\bra{b}_{\text{out}2}\otimes I$ for $a,b \in \{0,1\}$.
        Then any witness state can be represented as $\ket{\Psi}=\alpha\ket{w_\text{no}}+\beta\ket{\phi}$, where $|\alpha|^2+|\beta|^2=1$, $\|\Pi_{01}U_x\ket{w_\text{no}}\ket{0^m}\|^2\geq 2/3$ and $\|(\Pi_{11}+\Pi_{01})U_x\ket{\phi}\ket{0^m}\|^2\leq 1/3$.
        To show the soundness, we consider two cases:
        \proofpara{Case 1}
        Assume that $\|\Pi_{11}U_x\ket{w_\text{no}}\ket{0^m}\|^2+\|\Pi_{11}U_x\ket{\phi}\ket{0^m}\|^2\leq 2/3-c$, where $c$ is sufficiently small fixed constant which we specify later.
        Then 
        \begin{align*}
            \|\Pi_{11}U_x\ket{\Psi}\ket{0^m}\|^2
            &=\| \alpha \Pi_{11}U_x\ket{w_\text{no}}\ket{0^m}+\beta \Pi_{11}U_x\ket{\phi}\ket{0^m}\|^2 \\
            &\leq \left( |\alpha|\|\Pi_{11}U_x\ket{w_\text{no}}\ket{0^m}\| + |\beta| \|\Pi_{11}U_x\ket{\phi}\ket{0^m}\|  \right)^2 \\
            &\leq \|\Pi_{11}U_x\ket{w_\text{no}}\ket{0^m}\|^2 + \|\Pi_{11}U_x\ket{\phi}\ket{0^m}\|^2 \\
            &\leq \frac{2}{3} - c.
        \end{align*}
        Here the second inequality follows from Cauchy–Schwarz inequality.
        \proofpara{Case 2}
        Assume that $\|\Pi_{11}U_x\ket{w_\text{no}}\ket{0^m}\|^2+\|\Pi_{11}U_x\ket{\phi}\ket{0^m}\|^2 > 2/3-c$.
        Then
        \begin{align*}
            &\|\Pi_{11}U_x\ket{\Psi}\ket{0^m}\|^2 \\
            &=\| \alpha \Pi_{11}U_x\ket{w_\text{no}}\ket{0^m}+\beta \Pi_{11}U_x\ket{\phi}\ket{0^m}\|^2 \\
            &\leq |\alpha|^2 \|\Pi_{11}U_x\ket{w_\text{no}}\ket{0^m}\|^2 + |\beta|^2 \|\Pi_{11}U_x\ket{\phi}\ket{0^m}\|^2 + 2|\alpha \beta | 
            \left| \bra{w_\text{no}}\bra{0^m} U_x^\dagger \Pi_{11} U_x\ket{\phi}\ket{0^m}\right| \\
            &\leq \frac{1}{3} + \left| \bra{w_\text{no}}\bra{0^m} U_x^\dagger \Pi_{11} U_x\ket{\phi}\ket{0^m}\right|.
        \end{align*}
        Since $\braket{w_\text{no}|\phi}=0$, we have
        \begin{align*}
            & \left| \bra{w_\text{no}}\bra{0^m} U_x^\dagger \Pi_{11} U_x\ket{\phi}\ket{0^m}\right| \\
            \leq & \left| \bra{w_\text{no}}\bra{0^m} U_x^\dagger \Pi_{01} U_x\ket{\phi}\ket{0^m}\right| + \left| \bra{w_\text{no}}\bra{0^m} U_x^\dagger (\Pi_{10} + \Pi_{00}) U_x\ket{\phi}\ket{0^m}\right| \\
            \leq & \|\Pi_{01} U_x \ket{w_\text{no}}\ket{0^m} \| \cdot \|\Pi_{01} U_x\ket{\phi}\ket{0^m}\| 
            + \|(\Pi_{10} + \Pi_{00}) U_x \ket{w_\text{no}}\ket{0^m} \| \cdot \|(\Pi_{10} + \Pi_{00}) U_x\ket{\phi}\ket{0^m}\| \\
            \leq & \|\Pi_{01} U_x\ket{\phi}\ket{0^m}\| + \|(\Pi_{10} + \Pi_{00}) U_x \ket{w_\text{no}}\ket{0^m} \| \\
            \leq & \sqrt{\frac{1}{3} - \|\Pi_{11} U_x \ket{\phi} \ket{0^m} \|^2} + \sqrt{1 - \| \Pi_{11} U_x \ket{w_\text{no}}\ket{0^m} \|^2 - \| \Pi_{01} U_x \ket{w_\text{no}}\ket{0^m} \|^2} \\
            \leq & \sqrt{\frac{1}{3} - \|\Pi_{11} U_x \ket{\phi} \ket{0^m} \|^2} + \sqrt{\frac{1}{3} - \| \Pi_{11} U_x \ket{w_\text{no}}\ket{0^m} \|^2} \\
            \leq & \sqrt{2\left( \frac{2}{3} - \|\Pi_{11} U_x \ket{\phi} \ket{0^m} \|^2 - \| \Pi_{11} U_x \ket{w_\text{no}}\ket{0^m} \|^2 \right)} \\
            \leq & \sqrt{2c}.
        \end{align*}
        Here we have used Cauchy–Schwarz inequality and given inequalities $\|\Pi_{01}U_x\ket{w_\text{no}}\ket{0^m}\|^2\geq 2/3$ and $\|(\Pi_{11}+\Pi_{01})U_x\ket{\phi}\ket{0^m}\|^2\leq 1/3$.
        We can conclude that
        \begin{equation*}
            \|\Pi_{11}U_x\ket{\Psi}\ket{0^m}\|^2 \leq \frac{1}{3} + \sqrt{2c}.
        \end{equation*}
        }

        \proofpara{\RED{Choosing a constant soundness gap}}
        \RED{
        Putting it altogether, in the NO case, 
        \begin{equation*}
            \|\Pi_{11}U_x\ket{\Psi}\ket{0^m}\|^2 \leq \max{\left( \frac{2}{3} - c, \frac{1}{3} + \sqrt{2c} \right)},
        \end{equation*}
        for sufficiently small fixed constant $c>0$.
        If we choose $c=1/24$, then
        \begin{equation*}
            \|\Pi_{11}U_x\ket{\Psi}\ket{0^m}\|^2 \leq \max{\left( \frac{5}{8}, \frac{1}{3} + \frac{1}{2\sqrt{3}} \right)} = \frac{5}{8} < \frac{2}{3},
        \end{equation*}
        which implies $5/8$ soundness.
        }

        \proofpara{\RED{Amplifying and recombining the two verifiers}}
        \RED{
        Next, we will construct the amplified $\dUQMA$ verifier.
        From the above discussion, we have a pair of verification circuits for $\UQMA$ and $\mathsf{co}\text{-}\UQMA$ with $2/3$ completeness and $5/8$ soundness.
        \RED{By using the witness-preserving \textsf{QMA} amplification scheme of Marriott and Watrous~\cite{marriott2005quantum}, we may assume that the completeness error and the soundness error of the verification circuits are at most $\mathcal{O}(2^{-n})$. This amplification acts on the same witness register and transforms the acceptance operator by a polynomial function, so the unique accepting witness and the exponentially small acceptance bound on all states orthogonal to it are preserved.}
        Let $U_\text{yes}$ and $U_\text{no}$ be amplified verification circuits for $\UQMA$ and $\mathsf{co}\text{-}\UQMA$. 
        Let $\mathcal{H}$ be the Hilbert space to which the original witness state belongs.
        Then we let the witness space be $\mathbb{C}^2 \otimes \mathcal{H}$ by adding one ancilla qubit.
        We construct a $\dUQMA[1-\mathcal{O}(2^{-n}), 1-\mathcal{O}(2^{-n}), \mathcal{O}(2^{-n})]$ verification circuit $U$ as follows: 
        \begin{enumerate}
            \item Measure the first ancilla qubit in the computational bases, obtaining $a\in \{0,1\}$.
            \item If $a=1$, perform $U_{\text{yes}}$ on the remaining witness register and measure its output qubit, obtaining $b\in \{0,1\}$. Output $(\text{out}_1,\text{out}_2)=(1,b)$.
            \item If $a=0$, perform $U_{\text{no}}$ on the remaining witness register and measure its output qubit, obtaining $b'\in \{0,1\}$. Output $(\text{out}_1,\text{out}_2)=(0,b')$.
        \end{enumerate}
        As shown below, $U$ satisfies the condition of a $\dUQMA[1-\mathcal{O}(2^{-n}), 1-\mathcal{O}(2^{-n}), \mathcal{O}(2^{-n})]$ verifier, thus $L\in \dUQMA[1-\mathcal{O}(2^{-n}), 1-\mathcal{O}(2^{-n}), \mathcal{O}(2^{-n})]$, which implies $\dUQMA[2/3, 2/3, 1/3] \subseteq \dUQMA[1-\mathcal{O}(2^{-n}), 1-\mathcal{O}(2^{-n}), \mathcal{O}(2^{-n})]$.
        \proofpara{\RED{Yes instances}}
        Assume $x\in L_{\mathrm{yes}}$.
        Let $\ket{w_{\mathrm{yes}}}$ be the unique $\UQMA$ witness for $U_{\mathrm{yes}}$.
        Then, for the witness $\ket{1}\ket{w_{\mathrm{yes}}}$, $U$ outputs $(1,1)$ with a probability at least $1-\mathcal{O}(2^{-n})$.
        }

        \RED{
        Next we show uniqueness.
        Let $\ket{\Phi}\in \mathbb{C}^2\otimes \mathcal{H}$ be any quantum state orthogonal to  $\ket{1}\ket{w_{\mathrm{yes}}}$.
        We can write $\ket{\Phi} = \ket{1}\ket{\phi_1}+\ket{0}\ket{\phi_0}$, where $\ket{\phi_1}, \ket{\phi_0}\in \mathcal{H}$ are unnormalized states and satisfy $\|\ket{\phi_1}\|^2+\|\ket{\phi_0}\|^2=1$ and $\braket{w_{\mathrm{yes}}|\phi_1}=0$.
        \RED{Then we can see that the probability that $U$ outputs $(1,1)$ is upper-bounded by $\| \Pi_1 U_{\text{yes}} \ket{\phi_1}\ket{0^m} \|^2 \leq \mathcal{O}(2^{-n})\|\ket{\phi_1}\|^2 \leq \mathcal{O}(2^{-n})$ by the construction of the verifier.}
        \RED{Also the probability that $U$ outputs $(0,1)$ is upper-bounded by $\| \Pi_1 U_{\text{no}} \ket{\phi_0}\ket{0^m} \|^2 \leq \mathcal{O}(2^{-n})\|\ket{\phi_0}\|^2 \leq \mathcal{O}(2^{-n})$.}
        Therefore, for any state orthogonal to  $\ket{1}\ket{w_{\mathrm{yes}}}$, the second output qubit of the verifier $U$ is $1$ with a probability at most $\mathcal{O}(2^{-n})$.
        }

        \proofpara{\RED{No instances}}
        \RED{
        Assume $x\in L_{\mathrm{no}}$.
        Let $\ket{w_{\mathrm{no}}}$ be the unique $\mathsf{co}\text{-}\UQMA$ witness for $U_{\mathrm{no}}$.
        Then, for the witness $\ket{0}\ket{w_{\mathrm{no}}}$, $U$ outputs $(0,1)$ with a probability at least $1-\mathcal{O}(2^{-n})$.
        }
        
        \RED{
        Next we show uniqueness.
        Let $\ket{\Phi}\in \mathbb{C}^2\otimes \mathcal{H}$ be any quantum state orthogonal to  $\ket{0}\ket{w_{\mathrm{no}}}$.
        We can write $\ket{\Phi} = \ket{1}\ket{\phi_1}+\ket{0}\ket{\phi_0}$, where $\ket{\phi_1}, \ket{\phi_0}\in \mathcal{H}$ are unnormalized states and satisfy $\|\ket{\phi_1}\|^2+\|\ket{\phi_0}\|^2=1$ and $\braket{w_{\mathrm{no}}|\phi_0}=0$.
        \RED{Then we can see that the probability that $U$ outputs $(1,1)$ is upper-bounded by $\| \Pi_1 U_{\text{yes}} \ket{\phi_1}\ket{0^m} \|^2 \leq \mathcal{O}(2^{-n})\|\ket{\phi_1}\|^2 \leq \mathcal{O}(2^{-n})$ by the construction of the verifier.}
        \RED{Also the probability that $U$ outputs $(0,1)$ is upper-bounded by $\| \Pi_1 U_{\text{no}} \ket{\phi_0}\ket{0^m} \|^2 \leq \mathcal{O}(2^{-n})\|\ket{\phi_0}\|^2 \leq \mathcal{O}(2^{-n})$.}
        Therefore, for any state orthogonal to  $\ket{0}\ket{w_{\mathrm{no}}}$, the second output qubit of the verifier $U$ is $1$ with a probability at most $\mathcal{O}(2^{-n})$.
        }

    \end{proof}

    \RED{
    Finally, we show that $\dUQMA$ is equivalent to $\UQMA \cap \mathsf{co}\text{-}\UQMA$.
    }
    \RED{
    \begin{lemma}
        \begin{equation*}
            \dUQMA = \UQMA \cap \mathsf{co}\text{-}\UQMA .
        \end{equation*}
    \end{lemma}
    }
    \begin{proof}
        \RED{
        If we decide to accept when the output qubits are $\ket{1}_{\text{out}1}\ket{1}_{\text{out}2}$, then \dUQMA\ can be seen as a special case of \UQMA~considering the amplified verifier of \dUQMA~as shown in Lemma~\ref{thm:dUQMA_amplification_restatement}.
        Similarly, if we accept when the output qubits are $\ket{0}_{\text{out}1}\ket{1}_{\text{out}2}$, then \dUQMA\ is a special case of $\mathsf{co}\text{-}\UQMA$. 
        Therefore, $\dUQMA \subseteq \UQMA \cap \mathsf{co}\text{-}\UQMA$ holds.
        }

        \RED{
        Now we confirm the opposite direction $\UQMA \cap \mathsf{co}\text{-}\UQMA \subseteq \dUQMA$.
        Let a promise problem $L=(L_\text{yes}, L_\text{no}) \in \UQMA \cap \mathsf{co}\text{-}\UQMA$.
        Then we have a pair of  verification circuits $U_{\text{yes}}$ and $U_{\text{no}}$.
        The circuit $U_{\text{yes}}$ ($U_{\text{no}}$) accepts a unique witness state in the case of a YES (NO) instance and rejects any witness state in the case of a NO (YES) instance.
        In the latter half of the proof of Lemma~\ref{thm:dUQMA_amplification_restatement}, we have shown how to construct a verification circuit for $\dUQMA[1-\mathcal{O}(2^{-n}), 1-\mathcal{O}(2^{-n}), \mathcal{O}(2^{-n})]$ given a pair of verification circuits for $\UQMA$ and $\mathsf{co}\text{-}\UQMA$ with $2/3$ completeness and $5/8$ soundness.
        Using this construction, we can construct a $\dUQMA$ verification circuit from $U_{\text{yes}}$ and $U_{\text{no}}$.
        Thus $L\in \dUQMA$, which implies $\UQMA \cap \mathrm{co}\text{-}\UQMA \subseteq \dUQMA$.
        }

    \end{proof}

    \section{Proof of Theorem~\ref{thm:BQPcomp}: Showing BQP-Completeness\label{sec:BQPcomp}}

    \subsection{Containment in \textsf{BQP}\label{sec:BQP_containment}}
    \begin{proof}
        This result follows from Theorem~\ref{thm:BPE_algorithm} and the algorithm presented in Section~\ref{sec:improved_algorithm} together with known ground-state preparation methods~\cite{ge2019faster,lin2020near}.
        Given a polynomial-size classical description of a guiding state \( \ket{c} \) that has inverse-polynomial overlap with the ground state \( \ket{\psi(0)} \), one can prepare \( \ket{\psi(0)} \) in polynomial time, even without prior knowledge of the ground-state energy, by exploiting the spectral gap condition of \(H(0)\) (see, e.g., Corollary~9 in Ref.~\cite{lin2020near}). Once the ground state is prepared, we apply quantum phase estimation to two rescaled adiabatic evolutions, as described in Section~\ref{sec:improved_algorithm}.
        This procedure yields an estimate \( \widehat{\theta}_B \) of the Berry phase that satisfies
        $|\widehat{\theta}_B - \theta_B| \leq \varepsilon_B$
        with high probability greater than \(2/3\), where \( \varepsilon_B \ge 1/\mathrm{poly}(n) \).
        The entire procedure uses polynomial circuit depth and gate complexity under the assumptions of Definition~\ref{def:BPE_guided}.

        To solve the decision problem in Definition~\ref{def:BPE_guided}, we set the target precision so that \( \varepsilon_B < \RED{\delta} \).
        Under the promise that \( \theta_B \) lies either in \( [a+\delta,b-\delta]_{2\pi} \) or in \( [b+\delta,a-\delta]_{2\pi} \), this accuracy ensures that the algorithm can determine which interval contains \( \theta_B \).
        Concretely, if \( \widehat{\theta}_B \) lies within distance \( \delta-\varepsilon_B \) of \( [a,b]_{2\pi} \), the algorithm outputs \(1\); otherwise, it outputs \(0\).

        Therefore, the estimation algorithm directly solves the promised decision problem with constant success probability greater than \(2/3\), establishing containment in \(\mathsf{BQP}\).
    \end{proof}

    \subsection{\BQP-hardness}\label{sec:BQP_hardness}
    \begin{proof}
\proofpara{\RED{Construction of the parameterized Hamiltonian}} 
        Let $x$ be an instance of a \BQP\ problem.
        \RED{Let \(U_x=U_TU_{T-1}\cdots U_1\underbrace{I\cdots I}_{M}\) be a
        quantum circuit with pre-amplified success probability and \(M\)-step initial idling
        on \(n\) qubits such that}
        \begin{itemize}
            \item If $x \in L_{\mathrm{yes}}$, then $p_1= \| \Pi_1 U_x \ket{0^n} \|^2 = 1 - \mathcal{O}(2^{-n})$,
            \item If $x \in L_{\mathrm{no}}$, then $p_1  =\mathcal{O}(2^{-n})$.
        \end{itemize}
        The $n$-qubit register is initialized to $\ket{0^n}$.
        \RED{Here, $M$ is chosen s.t. $T/(T+M+1)\leq 1/\mathrm{poly}(n)$.}

        Let us consider a standard 5-local circuit-to-Hamiltonian construction \cite{kitaev2002classical} (without the output term) $H_{\mathrm{hist}}=H_\text{in}+H_\text{prop}+H_\text{clock}$ on $\mathcal{H}_n\otimes \mathcal{H}_\text{clock}$ where
        \begin{align*}
             & H_{\text{in}} \coloneqq \sum_i \ket{1}\bra{1}_{A_i}  \otimes \ket{0}\bra{0}_{C_1}                  \\
             & H_{\text{clock}} \coloneqq \sum_{t=1}^{T+M-1} \ket{0}\bra{0}_{C_t}\otimes \ket{1}\bra{1}_{C_{t+1}} \\
             & H_{\text{prop}} \coloneqq
            \sum_{t=1}^{T+M} H_t
        \end{align*}
        where
        $$H_t = -\frac{1}{2}U_t\otimes\ket{t}\bra{t-1}_C-\frac{1}{2} U_{t}^\dagger \otimes \ket{t-1}\bra{t}_C +\frac{1}{2}(\ket{t}\bra{t}_C+\ket{t-1}\bra{t-1}_C). $$
        Here, $\mathcal{H}_n$ is the $n$-qubit Hilbert space and $\mathcal{H}_\text{clock}$ is the clock space (time is represented by unary), $C_i$ represents the $i$-th qubit in the clock register.

        The non-degenerate ground state of $H_{\mathrm{hist}}$ is given by
        $$
            \ket{\psi_{\text{hist}}}\coloneqq
            \frac{1}{\sqrt{T+M+1}}
            \sum_{t=0}^{T+M}
            U_tU_{t-1}...U_1 \ket{0^n}\otimes \ket{t},
        $$
        where 
        \RED{$U_t=I_{n+m}$ for $1\leq t\leq M$, and \(U_{M+j}\) is the \(j\)-th gate of the original amplified computation for \(1\leq j\leq T\).}
        \RED{Let \(N\coloneqq T+M+1\).}
        Note that $H_{\mathrm{hist}}$ has a spectral gap $\Delta=\Omega\left(\frac{1}{(T+M)^3}\right)$ \cite{gharibian2012hardness}.

        Let $H_r(\lambda)$  be
        $$
            H_r(\lambda) \coloneqq H_{\mathrm{hist}} + r V(\lambda)
        $$
        \RED{where \(r\) is an inverse-polynomial parameter chosen below}, and
        \[
            \RED{
            V(\lambda):=
            \left(
            e^{2\pi i \lambda}\ket{1}\bra{0}_\text{out}
            + e^{-2\pi i \lambda} \ket{0}\bra{1}_{\text{out}}
            \right)
            \otimes \ket{1}\bra{1}_{C_{T+M}}.
            }
        \]
        \RED{Here, the subscript ``out'' indicates that the operator acts on the output qubit, and \(C_{T+M}\) is the final clock qubit. On the legal clock subspace, \(\ket{1}\bra{1}_{C_{T+M}}\) coincides with the projector onto the final clock state \(\ket{T+M}\). We also use the same subscript notation for states (e.g., $\ket{0}_{\mathrm{out}}$ denotes the state of the output qubit).}
        {
                It can be seen that
                    {$H_r(0)=H_r(1)$ and}
                $$\left\|\frac{\mathrm{d}H_r(\lambda)}{\mathrm{d}\lambda}\right\|\in \mathcal{O}(1) \text{ \ and \ } \left\|\frac{\mathrm{d}^2H_r(\lambda)}{\mathrm{d}\lambda^2}\right\|\in \mathcal{O}(1)$$
                for all $\lambda\in  \mathbb{R}$.
            }

\proofpara{\RED{Analysis of parameterized ground states}}
        \RED{By taking \(M\in \mathrm{poly}(n)\) sufficiently large,
        \(\ket{\psi_{\text{hist}}}\)
        has large overlap with the guiding state}
        \begin{align}
        \label{eq:guiding}
            \RED{\ket{c} = \ket{0^n}\otimes \ket{[0,M]}}
        \end{align}
        \RED{where}
        $$
            \RED{\ket{[0,M]}:= \frac{1}{\sqrt{M+1}}\sum_{t=0}^{M}  \ket{t}.}
        $$
        \RED{Let \(\eta_M\coloneqq \sqrt{T/N}\). By choosing \(M\in\mathrm{poly}(n)\) sufficiently large, \(\eta_M\) can be made inverse-polynomially small. Then}
        \begin{align*}
            \RED{\ket{\psi_\text{hist}} =
                \ket{0^n}\otimes \ket{[0,M]}+\mathcal{O}(\eta_M).}
        \end{align*}

        Recall that
        the eigenvalue of $\ket{\psi_{\text{hist}}}$ is $0$.
        With the second-order perturbation theory, the normalized ground state of $H_r(\lambda)$ can be written as
        \begin{align*}
            \ket{\psi_r(\lambda)}= &
            \left(
            1-\frac{r^2}{2}\sum_{k\neq0}\frac{|\bra{\psi^{(k)}}V\ket{\psi_{\text{hist}}}|^2}{(E_k)^2}
            \right)
            \ket{\psi_{\text{hist}}}- r\sum_{k\neq 0} \frac{\bra{\psi^{(k)}}V(\lambda)\ket{\psi_\text{hist}}}{ E_k} \ket{\psi^{(k)}} \\&
            + r^2\ket{\perp(\lambda)} + \mathcal{O}(r^3)
        \end{align*}
        where $E_k$ is the $k$-th eigenvalue of \RED{$H_{\mathrm{hist}}$} and $\ket{\psi^{(k)}}$ is the corresponding eigenvector ($\ket{\psi^{(0)}}= \ket{\psi_\text{hist}}$), and
        $$
            \ket{\perp(\lambda)}
            = \sum_{k\neq 0,k'\neq 0} \ket{\psi^{(k)}}\frac{\bra{\psi^{(k)}}V\ket{\psi^{(k')}}\bra{\psi^{(k')}}V\ket{\psi_\text{hist}}}{{E_k E_{k'}}}
            - \sum_{k\neq 0}\ket{\psi^{(k)}}\frac{\bra{\psi^{(k)}}V\ket{\psi_\text{hist}}\bra{\psi_\text{hist}}V\ket{\psi_\text{hist}}}{(E_k)^2}.
        $$

\proofpara{\RED{Analysis of Berry curvature and Berry phase}}
        Now, it can be seen that
        \begin{align*}
            \frac{\mathrm{d}}{\mathrm{d}\lambda} \ket{\psi_r(\lambda)}= &
            {-}r^2\sum_{k\neq0}\frac{ \mathrm{Re}\Big(
                \bra{\psi_{\text{hist}}}V\ket{\psi^{(k)}}
                \bra{\psi^{(k)}}\frac{\mathrm{d} V(\lambda)}{\mathrm{d} \lambda}\ket{\psi_{\text{hist}}}\Big)
            }{(E_k)^2}
            \ket{\psi_{\text{hist}}}                                      \\&
            -
            r\sum_{k\neq 0} \frac{\bra{\psi^{(k)}}\frac{\mathrm{d} V(\lambda)}{\mathrm{d} \lambda}\ket{\psi_{\text{hist}}}}{ E_k} \ket{\psi^{(k)}} + r^2  \frac{\mathrm{d}}{\mathrm{d}\lambda} \ket{\perp(\lambda)}+ \mathcal{O}(r^3).
        \end{align*}
        It can be easily seen that
        $$
            \bra{\psi_{\text{hist}}}
            \frac{\mathrm{d}}{\mathrm{d}\lambda} \ket{\perp(\lambda)}=0 .
        $$
        Therefore,
        \begin{align*}
            \bra{\psi_r(\lambda)}\frac{\mathrm{d}}{\mathrm{d}\lambda} \ket{\psi_r(\lambda)}= &
            {-}r^2\sum_{k\neq0}\frac{ \mathrm{Re}\Big(
                \bra{\psi_{\text{hist}}}V\ket{\psi^{(k)}}
                \bra{\psi^{(k)}}\frac{\mathrm{d} V(\lambda)}{\mathrm{d} \lambda}\ket{\psi_{\text{hist}}}\Big)
            }{(E_k)^2}                                                                                                        \\
                                                                                             & +  r^2 \sum_{k\neq 0,k'\neq 0}
            \frac{
                {\bra{\psi_\text{hist}}V(\lambda)\ket{\psi^{(k)}}}
            }{E_k} \cdot
            \frac{\bra{\psi^{(k')}}\frac{\mathrm{d} V(\lambda)}{\mathrm{d} \lambda}\ket{\psi_{\text{hist}}}}{E_{k'}} \braket{\psi^{(k)}|\psi^{(k')}}+\mathcal{O}(r^3)
            \\
                                                                                             & =
            {i} r^2 \mathrm{Im}\left(
            \bra{\psi_{\text{hist}}}
            V(\lambda)
            \left(
            \sum_{{k\neq 0}}
            \frac{\ket{\psi^{(k)}}\bra{\psi^{(k)}}}{(E_k)^2}
            \right)
            \frac{\mathrm{d} V(\lambda)}{
                \mathrm{d} \lambda}\ket{\psi_{\text{hist}}}\right)
            +\mathcal{O}(r^3).
        \end{align*}

        \RED{Here and below, \(\mathcal{O}(2^{-n})\) denotes a vector of exponentially small norm.
        }
        It can be seen that
        \begin{align*}
            V(\lambda)\ket{\psi_{\text{hist}}}=
            \begin{cases}
                \RED{\frac{e^{-2\pi i \lambda}}{\sqrt N}  \ket{0}_{\text{out}}\otimes \ket{g_0} \otimes \ket{T+M}+ \mathcal{O}(2^{-n}),}    & \RED{\text{if } x\in L_{\text{yes}}} \\
                \RED{\frac{e^{2\pi i \lambda}}{\sqrt N}\ \   \ket{1}_{\text{out}}\otimes \ket{g_1} \otimes \ket{T+M}+ \mathcal{O}(2^{-n}),} & \RED{\text{if } x\in L_{\text{no}}.}
            \end{cases}
        \end{align*}
        and
        \begin{align*}
            \frac{\mathrm{d} V(\lambda)}{
                \mathrm{d} \lambda}\ket{\psi_{\text{hist}}}=
            \begin{cases}
                \RED{-\frac{2\pi ie^{-2\pi i \lambda}}{\sqrt N}  \ket{0}_{\text{out}}\otimes \ket{g_0} \otimes \ket{T+M}+ \mathcal{O}(2^{-n}),}
                & \RED{\text{if } x\in L_{\text{yes}}} \\
                \RED{\ \  \frac{2\pi i  e^{2\pi i \lambda}}{\sqrt N} \ \  \ket{1}_{\text{out}}\otimes \ket{g_1} \otimes \ket{T+M}+ \mathcal{O}(2^{-n}),}
                & \RED{\text{if } x\in L_{\text{no}},}
            \end{cases}
        \end{align*}
        where we have used
        $$
            \RED{\frac{\mathrm{d} V(\lambda)}{\mathrm{d} \lambda}= 2\pi i
            \left(e^{2\pi i \lambda}\ket{1}\bra{0}_\text{out}
            - e^{-2\pi i \lambda}\ket{0}\bra{1}_{\text{out}}\right)
            \otimes \ket{1}\bra{1}_{C_{T+M}}.}
        $$
        Therefore, omitting exponentially small terms, we obtain
        \begin{align*}
             & i\bra{\psi_r(\lambda)}\frac{\mathrm{d}}{\mathrm{d}\lambda} \ket{\psi_r(\lambda)}= \\
             &
            \begin{cases}
                \RED{\ \    \frac{r^2\cdot 2\pi}{N}  \bra{0}_{\mathrm{out}}\bra{g_0}\bra{T+M}\left(
                \sum_{k\neq 0}
                \frac{\ket{\psi^{(k)}}\bra{\psi^{(k)}}}{(E_k)^2}
                \right)\ket{0}_{\mathrm{out}} \ket{g_0}\ket{T+M}
                +\mathcal{O}(r^3),}
                \\
                \RED{-\frac{r^2\cdot 2\pi}{N} \bra{1}_{\mathrm{out}}\bra{g_1}\bra{T+M}\left(
                \sum_{k\neq 0}
                \frac{\ket{\psi^{(k)}}\bra{\psi^{(k)}}}{(E_k)^2}
                \right)\ket{1}_{\mathrm{out}} \ket{g_1}\ket{T+M}
                +\mathcal{O}(r^3)}
            \end{cases}
        \end{align*}
        for $x\in L_{\text{yes}}$ and $x\in L_{\text{no}}$, respectively.
        Observe that
        $
            \Delta\leq E_k \leq K\coloneqq \|H_{\mathrm{hist}}\|=\RED{\mathcal{O}(N+n)}
        $ and the following holds:
        $$
            0 \preceq
            \sum_{k\neq 0}
            \frac{\ket{\psi^{(k)}}\bra{\psi^{(k)}}}{K^2}
            \preceq
            \sum_{k\neq 0}
            \frac{\ket{\psi^{(k)}}\bra{\psi^{(k)}}}{(E_k)^2}
            \preceq
            \sum_{k\neq 0}
            \frac{\ket{\psi^{(k)}}\bra{\psi^{(k)}}}{\Delta^2}.
        $$
        This implies for any state $\ket{\phi}\in \mathcal{H}_n\otimes\mathcal{H}_{\mathrm{clock}}$,
        $$
            0\leq
            \bra{\phi}\left( \sum_{k\neq 0}
            \frac{\ket{\psi^{(k)}}\bra{\psi^{(k)}}}{K^2}\right) \ket{\phi}\leq
            \bra{\phi}\left( \sum_{k\neq 0}
            \frac{\ket{\psi^{(k)}}\bra{\psi^{(k)}}}{(E_k)^2}\right) \ket{\phi}\leq
            \bra{\phi}\left( \sum_{k\neq 0}
            \frac{\ket{\psi^{(k)}}\bra{\psi^{(k)}}}{\Delta^2}\right) \ket{\phi}.
        $$
        Now, because
        $\sum_{k\neq 0}\ket{\psi^{(k)}}\bra{\psi^{(k)}}=I-\ket{\psi_{\text{hist}}}\bra{\psi_{\text{hist}}}$,
        $$
            0\leq
            \bra{\phi} \frac{I-\ket{\psi_{\text{hist}}}\bra{\psi_{\text{hist}}}}{K^2} \ket{\phi}
            \leq
            \bra{\phi}\left( \sum_{k\neq 0}
            \frac{\ket{\psi^{(k)}}\bra{\psi^{(k)}}}{(E_k)^2}\right) \ket{\phi}\leq
            \bra{\phi} \frac{I-\ket{\psi_{\text{hist}}}\bra{\psi_{\text{hist}}}}{\Delta^2} \ket{\phi}
        $$
        holds for any $\ket{\phi}\in \mathcal{H}_n\otimes\mathcal{H}_{\mathrm{clock}}$.
        \RED{We can utilize this inequality for
        $\ket{0}_{\mathrm{out}} \ket{g_0}\ket{T+M}$ and $\ket{1}_{\mathrm{out}}  \ket{g_1}\ket{T+M}$ for the YES and NO cases, respectively.}
        It holds that
        $$
            \RED{|\bra{\psi_{\mathrm{hist}}}\ket{0}_{\mathrm{out}} \ket{g_0}\ket{T+M}|=\mathcal{O}(2^{-n})}
        $$
        for $x\in L_{\mathrm{yes}}$ and
        $$
            \RED{|\bra{\psi_{\mathrm{hist}}}\ket{1}_{\mathrm{out}} \ket{g_1}\ket{T+M}| = \mathcal{O}(2^{-n})}
        $$
        for $x\in L_{\mathrm{no}}$.
        Therefore, if $x\in L_{\text{yes}}$,
        $$
            \RED{ \frac{2\pi r^2}{N K^2}\left(1-\mathcal{O}(2^{-n})\right)
            -\mathcal{O}(r^3)
            \leq 
            i\mathcal{A}_\lambda 
            \leq   
            \frac{2\pi r^2}{N\Delta^2}
            +\mathcal{O}(r^3)}
        $$
        and if $x\in L_{\text{no}}$,
        $$
            \RED{ \frac{2\pi r^2}{N K^2}\left(1-\mathcal{O}(2^{-n})\right)
            -\mathcal{O}(r^3)
            \leq -i\mathcal{A}_\lambda \leq   
            \frac{2\pi r^2}{N\Delta^2}
            +\mathcal{O}(r^3),}
        $$
        where $\mathcal{A}_\lambda=\bra{\psi(\lambda)}\frac{\mathrm{d}}{\mathrm{d}\lambda}\ket{\psi(\lambda)}$.
        \RED{
            We choose \(r\leq \Delta/4\) sufficiently small so that the \(\mathcal{O}(r^3)\) term is absorbed into the leading \(r^2/(NK^2)\) lower bound. Then the lower bound is still inverse-polynomially positive, while the upper bound is at most \(\pi/2\). Hence, for some \(\delta>1/\mathrm{poly}(n)\),}
        $$
            \delta \leq {i}\mathcal{A}_\lambda\leq \frac{\pi}{2}
        $$
        if $x\in L_{\text{yes}}$ and
        $$
            \delta \leq {-i}\mathcal{A}_\lambda\leq \frac{\pi}{2}
        $$
        if $x\in L_{\text{no}}$.

        The Berry phase is given by
        $$
            \theta_B= i\int_0^1 \mathcal{A}_\lambda\, \mathrm{d}\lambda.
        $$
        Therefore,
        \begin{equation*}
            \theta_B\in
            \begin{cases}
                {\left[\delta, \frac{\pi}{2}\right]} & \text{if } x\in L_\text{yes} \\
                {\left[\frac{3\pi}{2}, 2\pi-\delta\right]}
                                                     & \text{if } x\in L_\text{no}.
            \end{cases}
        \end{equation*}
        {Corresponding to the definition of the problem, by taking $a={0}$ and $b={\pi}$, $\theta_B \in[a+\delta/2,b-\delta/2]_{2\pi}$ if $x\in\mathcal{L}_{\mathrm{yes}}$ and $\theta_B\in[b+\delta/2,a-\delta/2]_{2\pi}$ if $x\in\mathcal{L}_{\mathrm{no}}$.
        }

    \proofpara{\RED{Analysis of guiding state overlap}}

        \RED{For clarity, we recall the parameter dependency structure:
        \begin{center}
        \fbox{\begin{minipage}{0.9\linewidth}
        \[
        \begin{array}{c}
        T\in \mathrm{poly}(n)\quad\text{is fixed by the input circuit}
        \\[0.4em]
        \Downarrow
        \\[0.4em]
        \text{choose }M\in\mathrm{poly}(n)\text{ so that }T/N\leq 1/\mathrm{poly}(n)
        \quad (N=T+M+1)
        \\[-0.1em]
        \text{\RED{this choice controls the initial-idling guiding-state overlap}}
        \\[0.4em]
        \Downarrow
        \\[0.4em]
        \Delta=\Omega(N^{-3}),\qquad K\coloneqq \|H_{\mathrm{hist}}\|=\mathcal{O}(N+n)
        \\[0.4em]
        \Downarrow
        \\[0.4em]
        \text{choose } r \text{ such that}
        \ r\leq \Delta/4,
        \\[-0.1em]
        \text{and the cubic term is absorbed (i.e., } 
        2\pi \frac{r^2}{NK^2}\left(1-\mathcal{O}(2^{-n})\right)-\mathcal{O}(r^3)>1/\mathrm{poly}(n)).
        \end{array}
        \]
        \end{minipage}}
        \end{center}
        In particular, this choice of \(M\) is not tied to the perturbation strength \(r\).}

    \RED{From the definition of the guiding state $\ket{c}$ in Eq.~\eqref{eq:guiding}, it can be observed that}
        \begin{align}
        \label{eq:overlap}
            |\braket{c|\psi_{\mathrm{hist}}}|^2=
            \frac{1}{(T+M+1)(M+1)}
            \left|
            \sum_{t=0}^{M} \braket{t|t} \right|^2 = \frac{M+1}{T+M+1}
            =1-\frac{T}{N}.
        \end{align}
        \RED{
        By taking \(M\in\mathrm{poly}(n)\) sufficiently large so that \(T/N\leq 1/\mathrm{poly}(n)\), we obtain
        \[
            |\braket{c|\psi_{\mathrm{hist}}}|^2 \geq 1-\frac{1}{\mathrm{poly}(n)}.
        \]
        Since \(\ket{\psi_r(0)}\) is an \(\mathcal{O}(r/\Delta)\)-perturbation of \(\ket{\psi_{\mathrm{hist}}}\) and \(r/\Delta\leq 1/\mathrm{poly}(n)\), the same guiding state satisfies
        \[
            |\braket{c|\psi_r(0)}|^2 \geq 1-\frac{T}{N}-\mathcal{O}(r/\Delta) \geq  1-\frac{1}{\mathrm{poly}(n)}.
        \]
        Thus a classical description of \(\ket{c}\) provides a guiding state with overlap \(\gamma=1-1/\mathrm{poly}(n)\) with the initial ground state of \(H_r(0)\). A simple state
        \(\ket{c'}= \ket{0^n}\otimes \ket{0}_{C}\) also works as a guiding state with overlap \(\gamma \geq 1/\mathrm{poly}(n)\).}

\proofpara{\RED{Spectral gap}}
        Recall
        $
            H_r(\lambda) = H_{\mathrm{hist}} + r V(\lambda)
        $
        and the spectral gap of $H_{\mathrm{hist}}$ is larger than $\Delta\in \Omega(1/(T+M)^3)$.
        Because $\|r V(\lambda)\|=r$, by taking $r\leq \Delta/4$,
        the spectral gap of $H_r(\lambda) = H_{\mathrm{hist}} + r V(\lambda)$ is lower bounded by $\Delta/2$ for any $\lambda$.

    \end{proof}

    \section{Proof of Theorem~\ref{thm:UQMAcomp}: Showing \RED{$\UQMA \cap \mathsf{co}\text{-}\UQMA$}-Completeness}
    \label{sec:UQMA}

    \subsection{Containment in \RED{$\UQMA \cap \mathsf{co}\text{-}\UQMA$}}
    \label{sec:UQMA_containment}

    We now show that the decision version of the Berry phase estimation problem—determining whether the Berry phase $\theta_B$ acquired by the unique ground state of a family of Hamiltonians $\{H(\lambda)\}_{\lambda \in [0,1]}$ lies
    in $[a+\delta,b-\delta]_{2\pi}$ or $[b+\delta,a-\delta]_{2\pi}$—is contained in the complexity class \dUQMA~.
    \RED{
    As shown in Section~\ref{sec:dUQMA_characterization}, $\dUQMA=\UQMA\cap \mathsf{co}\text{-}\UQMA$.
    Thus this result also implies containment in $\UQMA\cap \mathsf{co}\text{-}\UQMA$.
    Note that we can also show containment in $\UQMA\cap \mathsf{co}\text{-}\UQMA$ directly.
    However, the proof for $\dUQMA$ is simpler than that for $\UQMA\cap \mathsf{co}\text{-}\UQMA$ since $\dUQMA$ has a single verifier.
    }

    \begin{proof}

        The class {\dUQMA}
        consists of promise problems for which there exists a unique quantum witness in YES and NO cases.
        In the YES case, the verifier (a polynomial-time quantum circuit) outputs 11 with a probability at least $2/3$, while in the NO case the verifier outputs 01 with a probability at least $2/3$.
        Furthermore, for any state orthogonal to the unique witness, the verifier’s second output qubit is 0 with a probability at least $2/3$.

        \paragraph{Verifier Protocol.}
        The verifier proceeds as follows:
        \begin{enumerate}
            \item \textbf{Receive witness.} Merlin sends a quantum state $|\psi\rangle$ claimed to be the ground state of $H(0)$.

            \item \textbf{Ground state verification.} The verifier estimates the energy of $H(0)$ with respect to $|\psi\rangle$ within an additive error \RED{$< \Delta_\mathrm{min}/8$} using quantum phase estimation algorithm with a success probability $\RED{\geq}5/6$.
                  Then, if the estimate is smaller than \RED{$E_\text{th}$}, the verifier proceeds to the next step.
                  Otherwise, the verifier outputs 0 for the second output qubit and ends the verification procedure.
                  Since $H(0)$ is non-degenerate and gapped, this test uniquely identifies the ground state, i.e., the unique witness state that is separated by a constant factor $\geq 1/3$ in acceptance probability of the second output from all states that are orthogonal to itself.
            \item \textbf{Berry phase estimation.} 
            The verifier then runs the polynomial-time quantum algorithm for Berry phase estimation described in Section~\ref{sec:improved_algorithm}.
                  Using the improved two-speed protocol (with full $2\pi$ phase access), the verifier performs quantum phase estimation on the unitary evolutions $U(T,0)$ and $U_N(T,0)$ to extract estimates $\phi^{(1)}$ and $\phi^{(N)}$, and recovers $\theta_B$ to within additive error $\delta = 1/\mathrm{poly}(n)$.
                  This procedure succeeds with a probability at least $5/6$, which leads to the overall failure probability at most $1/3$.

            \item \textbf{Decision test.} If the estimate $\hat{\theta}_B$ lies in $[a+\delta,b-\delta]_{2\pi}$, the verifier accepts (outputs 11). If $\hat{\theta}_B\in[b+\delta,a-\delta]_{2\pi}$, the verifier rejects (outputs 01).
        \end{enumerate} 
        By the promise on the spectral gap and the adiabatic condition, the adiabatic evolution remains in the ground state manifold within a constant error.
        Since the ground state is non-degenerate, any incorrect witness is rejected during the energy test with high probability. 
        Conversely, the correct witness yields accurate Berry phase estimates, and the decision is made reliably with an inverse-polynomial promise gap.
        Thus, the problem falls into \dUQMA.
    \end{proof}

    \subsection{\RED{$\UQMA \cap \mathsf{co}\text{-}\UQMA$}-hardness}
    \label{sec:UQMA_hardness}

    \RED{
    Now we show $\dUQMA$-hardness, which implies $\UQMA \cap \mathsf{co}\text{-}\UQMA$-hardness.
    As discussed in Section~\ref{sec:dUQMA_characterization}, $\dUQMA$ is more suitable for this proof since $\dUQMA$ has a single verifier.
    }

    \begin{proof}

\proofpara{\RED{Construction of the non-parameterized Hamiltonian}}
        Let $x$ be an instance of a \dUQMA\ problem.
        Suppose that $U_x=U_T U_{T-1}\cdots U_1$ is a \dUQMA\ verification circuit on $w+m$ qubits
        with amplified gap (Lemma~\ref{thm:dUQMA_amplification_restatement}) s.t.
        \begin{itemize}
            \item If $x \in L_{\text{yes}}$, there exists $\ket{w_{\text{yes}}}$ s.t. $\left\| (\ket{1}\bra{1}_{\text{out}1}\otimes \ket{1}\bra{1}_{\text{out}2}\otimes I) U_x \ket{w_{\text{yes}}} \otimes \ket{0^m} \right\|^2 \geq 1-O(2^{-n})$  and for any $\ket{\phi} \perp \ket{w_{\text{yes}}}$, $\left\| (I\otimes \ket{1}\bra{1}_{\text{out2}}\otimes I) U_x \ket{\phi} \otimes \ket{0^m} \right\|^2 \leq O(2^{-n})$,
            \item If $x \in L_{\text{no}}$, there exists $\ket{w_{\text{no}}}$ s.t. $\left\| (\ket{0}\bra{0}_{\text{out}1}\otimes \ket{1}\bra{1}_{\text{out}2}\otimes I) U_x \ket{w_{\text{no}}} \otimes \ket{0^m} \right\|^2 \geq 1-O(2^{-n})$ and for any $\ket{\phi} \perp \ket{w_{\text{no}}}$, $\left\| (I\otimes \ket{1}\bra{1}_{\text{out2}}\otimes I) U_x \ket{\phi} \otimes \ket{0^m} \right\|^2 \leq O(2^{-n})$,
        \end{itemize}
        The $m$-qubit register is initialized to $\ket{0^m}$ and the $w$-qubit register is for the witness.
        \RED{Let \(N\coloneqq T+1\).}

        Let us consider a standard 5-local circuit-to-Hamiltonian construction \cite{kitaev2002classical} $H_0=H_\text{in}+H_\text{prop}+H_\text{stab}$ where
        \begin{align*}
             & H_{\text{in}} \coloneqq \sum_i \ket{1}\bra{1}_{A_i}  \otimes \ket{0}\bra{0}_{C_1}                 \\
             & H_{\text{stab}} \coloneqq \sum_{t=1}^{\RED{T-1}} \ket{0}\bra{0}_{C_t}\otimes \ket{1}\bra{1}_{C_{t+1}} \\
             & H_{\text{prop}} \coloneqq
            \sum_{t=1}^{\RED{T}} H_t
        \end{align*}
        where
        $$H_t = -\frac{1}{2}U_t\otimes\ket{t}\bra{t-1}_C-\frac{1}{2} U_{t}^\dagger \otimes \ket{t-1}\bra{t}_C +\frac{1}{2}(\ket{t}\bra{t}_C+\ket{t-1}\bra{t-1}_C). $$

        The $2^w$-degenerate ground state of $H_0$ is given by
        $$
            \ket{\psi_{\text{hist}}}\coloneqq
            \frac{1}{\sqrt{\RED{N}}}
            \sum_{t=0}^{\RED{T}}
            U_tU_{t-1}...U_1 \ket{0^m}\otimes \ket{\psi}\otimes \ket{t},
        $$
        for an arbitrary state $\ket{\psi}$.
        Note that $H_0$ has a spectral gap $\Delta_0 \geq \frac{\pi^2}{64\RED{T}^3}$~\cite{gharibian2012hardness, gharibian2019complexity}.

        Now we follow the small-penalty clock construction presented in Ref.~\cite{deshpande2022importance}.
        \RED{For brevity, write $\Pi^{(0)}_{\mathrm{out}2}\coloneqq \ket{0}\bra{0}_{\mathrm{out}2}$ and $\Pi^{(1)}_{\mathrm{out}2}\coloneqq \ket{1}\bra{1}_{\mathrm{out}2}$.}
        We add the penalty term
        \[
            H_1 = H_{\text{out}2}\coloneqq \varepsilon \RED{\Pi^{(0)}_{\mathrm{out}2}}\otimes \RED{\ket{1}\bra{1}_{C_{T}}}
        \]
        choosing $\varepsilon \leq \Delta_0/16$.
        \RED{Note that on the legal clock subspace, $\ket{1}\bra{1}_{C_{T}}$ coincides with the projection onto the clock state $\ket{T}$, so that $H_1$ acts only on the qubit $\mathrm{out}2$ and the final clock qubit.}
        Here we consider the spectral properties of the Hamiltonian $H_0+H_1$.
        To this end, we redefine the accept operator $\mathcal{Q} = (\bra{0^m}\otimes I) \mathcal{U}^\dagger (\RED{\Pi^{(1)}_{\mathrm{out}2}}\otimes I) \mathcal{U} (\ket{0^m}\otimes I)$, where $\mathcal{U}=\RED{U_TU_{T-1}\cdots U_1}$.
        Let the eigenvalues of $\mathcal{Q}$ be $\lambda_0 \geq \lambda_1 \geq \dots \geq \lambda_{2^w - 1}$.
        By the conditions on \dUQMA-circuit, 
        we have $\lambda_0 \geq 1-O(2^{-n})$ and $\lambda_1\leq O(2^{-n})$ in both YES and NO cases.
        According to the first-order Schrieffer-Wolff transformation, the Hamiltonian $H_0+H_1$ has eigenvalues \RED{$\varepsilon (1-\lambda_i) / N \pm O(\varepsilon^2/\Delta_0)$} (see the proof of Lemma 26 in Ref.~\cite{deshpande2022importance} for the detailed analysis), which can be used to analyze the ground state energy $E_0^{0+1}$, the first excited state energy $E_1^{0+1}$ and the spectral gap $\Delta_{0+1}=E_1^{0+1} - E_0^{0+1}$ of $H_0+H_1$.
        In both YES and NO cases, it is easy to see that the ground state energy is \RED{$$E_0^{0+1} \leq \varepsilon (1-\bra{\widetilde w} \mathcal{Q} \ket{\widetilde w})/N \leq \frac{\varepsilon \mathcal{O}(2^{-n})}{N},$$}
        where \RED{$\ket{\widetilde w}=\ket{w_{\mathrm{yes}}}$ in the YES case and $\ket{\widetilde w}=\ket{w_{\mathrm{no}}}$ in the NO case}.
        Also we obtain 
        \RED{$$E_1^{0+1} \geq \varepsilon (1-\lambda_1) / N - O(\varepsilon^2/\Delta_0) \geq \frac{\varepsilon}{N} - O(\varepsilon^2 / \Delta_0),$$}
        which leads to the spectral gap \RED{$\Delta_{0+1} \geq \frac{\varepsilon}{N} - O(\varepsilon^2 / \Delta_0)$}.
        Here, we omitted the exponentially small factor.
        \RED{We have $\Delta_{0+1}\geq 1/\mathrm{poly}(n)$ as long as $\varepsilon/\Delta_0 = o(1/N)$, hence we choose $\varepsilon=o(\Delta_0/N)=o(1/T^4)$.}

\proofpara{\RED{Analysis of the non-degenerate ground state}}
        Now we show that the {\it non-degenerate} ground state of $H_0+H_1$ is well approximated by
        \begin{equation*}
        \ket{\psi_{\text{hist,acc}}}\coloneqq
            \frac{1}{\sqrt{\RED{N}}}
            \sum_{t=0}^{\RED{T}}
            U_tU_{t-1}...U_1 \ket{0^m}\otimes \ket{\tilde{w}}\otimes \ket{t},
        \end{equation*}
        where $\ket{\tilde{w}} = \ket{w_{\text{yes}}}$ for the YES case and $\ket{\tilde{w}} = \ket{w_{\text{no}}}$ for the NO case.
        Let the eigenstates of $H_0+H_1$ be $\ket{E_0^{0+1}}, \ket{E_1^{0+1}}, \dots$ with eigenvalues  $E_0^{0+1} < E_1^{0+1} \leq \dots$ and expand the history state as $\ket{\psi_{\text{hist,acc}}} = c_0\ket{E_0^{0+1}} + \sum_{k\geq1} c_k \ket{E_k^{0+1}}$.
        Then we obtain
        \begin{align*}
            \braket{\psi_{\text{hist,acc}}|H_0+H_1|\psi_{\text{hist,acc}}}
            & \geq E_0^{0+1}|c_0|^2 + E_1^{0+1}(1-|c_0|^2)      \\
            & \geq E_0^{0+1}|c_0|^2 + \left(E_0^{0+1} + \frac{\varepsilon}{\RED{N}} - O(\varepsilon^2 / \Delta_0) \right)(1-|c_0|^2) \\
            & = E_0^{0+1} + \left( \frac{\varepsilon}{\RED{N}} - O(\varepsilon^2 / \Delta_0) \right)(1-|c_0|^2).
        \end{align*}
        Now we have
        \begin{equation*}
            E_0^{0+1} \leq \braket{\psi_{\text{hist,acc}}|H_0+H_1|\psi_{\text{hist,acc}}} \leq \frac{\varepsilon \mathcal{O}(2^{-n})}{\RED{N}}.
        \end{equation*}
        Combining the above two equations, for sufficiently small $\varepsilon = o(\Delta_0/\RED{N})$, we get
        \begin{equation*}
\begin{alignedat}{2}
          &\;&  \frac{\varepsilon \mathcal{O}(2^{-n})}{\RED{N}}
                     &\geq \left( \frac{\varepsilon}{\RED{N}} - \mathcal{O}(\varepsilon^2 / \Delta_0) \right)(1-|c_0|^2) \\
         \Leftrightarrow&&   
         \frac{\mathcal{O}(2^{-n})}{1-\mathcal{O}(\varepsilon \RED{N}/ \Delta_0)}
                    & \geq   (1-|c_0|^2)\\
       \Leftrightarrow&&     |c_0|^2  &\geq 1 - \mathcal{O}\left( 2^{-n} \right).
        \end{alignedat}
        \end{equation*}
        Thus we obtain
        \begin{equation}
        \label{eq:hist_E_0^0+1}
            |\braket{\psi_{\text{hist,acc}}|E_0^{0+1}}|^2 
            \geq 1 - \mathcal{O}\left( 2^{-n} \right),
        \end{equation}
        which concludes that the ground state of $H_0+H_1$ is sufficiently close to $\ket{\psi_{\text{hist,acc}}}$ when we have chosen $\varepsilon = o(\Delta_0/N) = o(1/T^4)$.

\proofpara{\RED{Construction of the parameterized Hamiltonian}}
        Let $H_r(\lambda)$
        be
        $$
            H_r(\lambda) \coloneqq H_0 + H_1 + r V(\lambda)
        $$
        \RED{where \(r\) is an inverse-polynomially small parameter chosen below}, and
        \[
            \RED{
            V(\lambda):=
            \left(
            e^{2\pi i \lambda}\ket{1}\bra{0}_{\text{out}1}
            + e^{-2\pi i \lambda} \ket{0}\bra{1}_{\text{out}1}
            \right)
            \otimes \ket{1}\bra{1}_{C_T}
            }
        \]
        and therefore $V(\lambda)=V(\lambda)^\dagger$.

        \RED{
        By amplification, there exist normalized states \(\ket{g_0}\) and \(\ket{g_1}\) such that}
        \[
        \RED{
            U_x(\ket{0^m}\otimes\ket{\widetilde w})
            =
            \begin{cases}
                \ket{1}_{\mathrm{out}1}\ket{1}_{\mathrm{out}2}\ket{g_0}
                +\mathcal{O}(2^{-n}),
                & x\in L_{\mathrm{yes}},\\
                \ket{0}_{\mathrm{out}1}\ket{1}_{\mathrm{out}2}\ket{g_1}
                +\mathcal{O}(2^{-n}),
                & x\in L_{\mathrm{no}}.
            \end{cases}
        }
        \]
        \RED{
        Here and below, \(\mathcal{O}(2^{-n})\) denotes a vector of exponentially small norm. 
        }

\proofpara{\RED{Analysis of parameterized ground states}}
        With second-order perturbation theory, the ground state of $H_r(\lambda)$ can be written as
        \begin{align*}
            \ket{\psi_r(\lambda)}= & \left( 1 -\frac{r^2}{2}
            \sum_{k\neq 0}\frac{|\bra{E_k^{0+1}}V(\lambda)\ket{E_0^{0+1}}|^2}{(E_k^{0+1} - E_0^{0+1})^2}\right)\ket{E_0^{0+1}}- r\sum_{k\neq 0} \frac{\bra{E_k^{0+1}}V(\lambda)\ket{E_0^{0+1}}}{E_k^{0+1} - E_0^{0+1}} \ket{E_k^{0+1}} \\
                                   & + r^2 \ket{\perp(\lambda)}
            + \mathcal{O}(r^3)
        \end{align*}
        where $E_k^{0+1}$ is the $k$-th eigenvalue of $H_0+H_1$, $\ket{E_k^{0+1}}$ is the corresponding eigenvector and
        \begin{equation*}
            \begin{split}
                \ket{\perp(\lambda)}
                \coloneqq
                 & \sum_{k\neq 0, l\neq 0}\frac{\bra{E_k^{0+1}}V(\lambda)\ket{E_l^{0+1}}\bra{E_l^{0+1}}V(\lambda)\ket{E_0^{0+1}}}{(E_k^{0+1} - E_0^{0+1})(E_l^{0+1} - E_0^{0+1})} \ket{E_k^{0+1}} \\
                 & -\sum_{k\neq 0}\frac{\bra{E_k^{0+1}}V(\lambda)\ket{E_0^{0+1}}\bra{E_0^{0+1}}V(\lambda)\ket{E_0^{0+1}}}{(E_k^{0+1} - E_0^{0+1})^2} \ket{E_k^{0+1}}.
            \end{split}
        \end{equation*}
\proofpara{\RED{Analysis of the Berry connection and Berry phase}}
        Now,
        \begin{equation*}
            \begin{split}
                \frac{\mathrm{d}}{\mathrm{d}\lambda} \ket{\psi_r(\lambda)}
                = & -r^2 \sum_{k\neq 0}\frac{\mathrm{Re} \left(\bra{E_0^{0+1}}V(\lambda)\ket{E_k^{0+1}}\bra{E_k^{0+1}}\frac{\mathrm{d} V(\lambda)}{\mathrm{d} \lambda}\ket{E_0^{0+1}}\right)}{(E_k^{0+1} - E_0^{0+1})^2}\ket{E_0^{0+1}}
                \\
                  & - r\sum_{k\neq 0} \frac{\bra{E_k^{0+1}}\frac{\mathrm{d} V(\lambda)}{\mathrm{d} \lambda}\ket{E_0^{0+1}}}{E_k^{0+1} - E_0^{0+1}} \ket{E_k^{0+1}}
                + r^2 \frac{\mathrm{d}}{\mathrm{d}\lambda} \ket{\perp(\lambda)}
                + \mathcal{O}(r^3).
            \end{split}
        \end{equation*}
        \RED{For simplicity, define}
        \[
            \RED{
            R_0^2\coloneqq
            \sum_{k\neq 0}
            \frac{\ket{E_k^{0+1}}\bra{E_k^{0+1}}}
            {(E_k^{0+1}-E_0^{0+1})^2}.
            }
        \]
        Observing that
        \begin{equation*}
            \bra{E_0^{0+1}} \frac{\mathrm{d}}{\mathrm{d}\lambda} \ket{\perp(\lambda)} = 0,
        \end{equation*}
        we obtain
        \begin{align*}
            \bra{\psi_r(\lambda)}\frac{\mathrm{d}}{\mathrm{d}\lambda} \ket{\psi_r(\lambda)}
            = &
            -r^2 \sum_{k\neq 0}\frac{\mathrm{Re} \left(\bra{E_0^{0+1}}V(\lambda)\ket{E_k^{0+1}}\bra{E_k^{0+1}}\frac{\mathrm{d} V(\lambda)}{\mathrm{d} \lambda}\ket{E_0^{0+1}}\right)}{(E_k^{0+1} - E_0^{0+1})^2} \\
              & + r^2 \sum_{k\neq 0,k'\neq 0}
            \frac{\bra{E_0^{0+1}}V(\lambda)\ket{E_k^{0+1}}}{E_k^{0+1} - E_0^{0+1}} \cdot
            \frac{\bra{E_{k'}^{0+1}}\frac{\mathrm{d} V(\lambda)}{\mathrm{d} \lambda}\ket{E_0^{0+1}}}{E_{k'}^{0+1} - E_0^{0+1}} \braket{E_k^{0+1}|E_{k'}^{0+1}}
            +\mathcal{O}(r^3)                                                                                                                                                                                    \\
            = &
            i r^2
            \mathrm{Im}\left( \bra{E_0^{0+1}}
            V(\lambda)
            \RED{R_0^2}
            \frac{\mathrm{d} V(\lambda)}{
                \mathrm{d} \lambda}\ket{E_0^{0+1}} \right)
            +\mathcal{O}(r^3).
        \end{align*}

        \RED{Choosing the global phase of \(\ket{E_0^{0+1}}\) appropriately, Eq.~\eqref{eq:hist_E_0^0+1} gives}
        \[
            \RED{
            \left\|\ket{E_0^{0+1}}-\ket{\psi_{\mathrm{hist,acc}}}\right\|
            =\mathcal{O}(2^{-n}).
            }
        \]
        \RED{Define the normalized states}
        \[
            \RED{
            \ket{\phi_{\mathrm{yes}}}
            \coloneqq
            \ket{0}_{\mathrm{out}1}\ket{1}_{\mathrm{out}2}\ket{g_0}\ket{T},
            \qquad
            \ket{\phi_{\mathrm{no}}}
            \coloneqq
            \ket{1}_{\mathrm{out}1}\ket{1}_{\mathrm{out}2}\ket{g_1}\ket{T}.
            }
        \]
        \RED{Since}
        \[
            \RED{
            \frac{\mathrm{d}V(\lambda)}{\mathrm{d}\lambda}
            =
            2\pi i
            \left(
            e^{2\pi i\lambda}\ket{1}\bra{0}_{\mathrm{out}1}
            -
            e^{-2\pi i\lambda}\ket{0}\bra{1}_{\mathrm{out}1}
            \right)
            \otimes\ket{1}\bra{1}_{C_T},
            }
        \]
        \RED{the final-time term of the history state gives}
        \begin{align*}
            \color{black}
            V(\lambda)\ket{E_0^{0+1}}
            &=
            \begin{cases}
                \dfrac{e^{-2\pi i\lambda}}{\sqrt N}\ket{\phi_{\mathrm{yes}}}
                +\mathcal{O}(2^{-n}),
                & x\in L_{\mathrm{yes}},\\[0.4em]
                \dfrac{e^{2\pi i\lambda}}{\sqrt N}\ket{\phi_{\mathrm{no}}}
                +\mathcal{O}(2^{-n}),
                & x\in L_{\mathrm{no}},
            \end{cases}\\
            \frac{\mathrm{d}V(\lambda)}{\mathrm{d}\lambda}\ket{E_0^{0+1}}
            &=
            \begin{cases}
                -\dfrac{2\pi i e^{-2\pi i\lambda}}{\sqrt N}\ket{\phi_{\mathrm{yes}}}
                +\mathcal{O}(2^{-n}),
                & x\in L_{\mathrm{yes}},\\[0.4em]
                \dfrac{2\pi i e^{2\pi i\lambda}}{\sqrt N}\ket{\phi_{\mathrm{no}}}
                +\mathcal{O}(2^{-n}),
                & x\in L_{\mathrm{no}}.
            \end{cases}
        \end{align*}
        \RED{Omitting the exponentially small terms, we obtain}
        \[
            \RED{
            i\mathcal{A}_\lambda=
            \begin{cases}
                \dfrac{2\pi r^2}{N}
                \bra{\phi_{\mathrm{yes}}}R_0^2\ket{\phi_{\mathrm{yes}}}
                +\mathcal{O}(r^3),
                &x\in L_{\mathrm{yes}},\\[0.6em]
                -\dfrac{2\pi r^2}{N}
                \bra{\phi_{\mathrm{no}}}R_0^2\ket{\phi_{\mathrm{no}}}
                +\mathcal{O}(r^3),
                &x\in L_{\mathrm{no}},
            \end{cases}
            }
        \]
        where $\mathcal{A}_\lambda=\bra{\psi_r(\lambda)}\frac{\mathrm{d}}{\mathrm{d}\lambda} \ket{\psi_r(\lambda)}$.
        Observe that $\Delta_{0+1} \leq (E_k^{0+1} - E_0^{0+1}) \leq K\coloneqq \|H_0+H_1\|= \RED{\mathcal{O}(N+n)}$ and the following holds:
        $$
            0 \preceq
            \sum_{k\neq 0}
            \frac{\ket{E_k^{0+1}}\bra{E_k^{0+1}}}{K^2}
            \preceq
            \sum_{k\neq 0}
            \frac{\ket{E_k^{0+1}}\bra{E_k^{0+1}}}{(E_k^{0+1} - E_0^{0+1})^2}
            \preceq
            \sum_{k\neq 0}
            \frac{\ket{E_k^{0+1}}\bra{E_k^{0+1}}}{\Delta_{0+1}^2} .
        $$
        This implies for any state $\ket{\phi}\in \mathcal{H}_w \otimes \mathcal{H}_m\otimes\mathcal{H}_{\mathrm{clock}}$,
        $$
            0\leq
            \bra{\phi}\left( \sum_{k\neq 0}
            \frac{\ket{E_k^{0+1}}\bra{E_k^{0+1}}}{K^2} \right) \ket{\phi}\leq
            \bra{\phi}\left( \sum_{k\neq 0}
            \frac{\ket{E_k^{0+1}}\bra{E_k^{0+1}}}{(E_k^{0+1} - E_0^{0+1})^2} \right) \ket{\phi}\leq
            \bra{\phi}\left( \sum_{k\neq 0}
            \frac{\ket{E_k^{0+1}}\bra{E_k^{0+1}}}{\Delta_{0+1}^2} \right) \ket{\phi}.
        $$
        Now, because
        $\sum_{k\neq 0}\ket{E_k^{0+1}}\bra{E_k^{0+1}} = I-\ket{E_0^{0+1}}\bra{E_0^{0+1}}$,
        $$
            0\leq
            \bra{\phi} \frac{I-\ket{E_0^{0+1}}\bra{E_0^{0+1}}}{K^2} \ket{\phi}
            \leq
            \bra{\phi}\left( \sum_{k\neq 0}
            \frac{\ket{E_k^{0+1}}\bra{E_k^{0+1}}}{(E_k^{0+1} - E_0^{0+1})^2} \right) \ket{\phi}\leq
            \bra{\phi} \frac{I-\ket{E_0^{0+1}}\bra{E_0^{0+1}}}{\Delta_{0+1}^2} \ket{\phi},
        $$
        holds for any $\ket{\phi}\in \mathcal{H}_w \otimes \mathcal{H}_m\otimes\mathcal{H}_{\mathrm{clock}}$.
        \RED{We apply this inequality to \(\ket{\phi_{\mathrm{yes}}}\) and \(\ket{\phi_{\mathrm{no}}}\) in the YES and NO cases, respectively. Since these states have the opposite value of the first output qubit from the corresponding final terms of the history states,}
        \[
            \RED{
            |\braket{E_0^{0+1}|\phi_{\mathrm{yes}}}|
            =\mathcal{O}(2^{-n})
            }
        \]
        \RED{for \(x\in L_{\mathrm{yes}}\), and}
        \[
            \RED{
            |\braket{E_0^{0+1}|\phi_{\mathrm{no}}}|
            =\mathcal{O}(2^{-n})
            }
        \]
        \RED{for \(x\in L_{\mathrm{no}}\).}
        Therefore, if $x\in L_{\text{yes}}$,
        \begin{align*}
            \color{black}{
            0< 2\pi \frac{r^2}{N K^2}
            \left(1-\mathcal{O}(2^{-n})\right)
            -\mathcal{O}(r^3)
            \leq i\mathcal{A}_\lambda
            \leq  2\pi \frac{r^2}{N\Delta_{0+1}^2}
            +\mathcal{O}(r^3)}
        \end{align*}
        and if $x\in L_{\text{no}}$,
        \begin{align*}
            \color{black}{
            0
            <2\pi \frac{r^2}{N K^2}
            \left(1-\mathcal{O}(2^{-n})\right)
            -\mathcal{O}(r^3)
            \leq -i\mathcal{A}_\lambda
            \leq  2\pi \frac{r^2}{N\Delta_{0+1}^2}
            +\mathcal{O}(r^3),}
        \end{align*}
        where $\mathcal{A}_\lambda=\bra{\psi(\lambda)}\frac{\mathrm{d}}{\mathrm{d}\lambda}\ket{\psi(\lambda)}$.
        
        \RED{
        We now specify the parameter choices. 
        We choose an \(\varepsilon=o(\Delta_0/N)=o(1/N^4)\), as required in the analysis of \(H_0+H_1\). 
        Then we choose a sufficiently small \(r\) satisfying
        }
        \[
            \RED{
            r\leq
            \min\left\{
                \frac{\Delta_{0+1}}{4},
                O\left(\frac{1}{N K^2}\right)
            \right\}
            \leq 
            \min\left\{
                O(\varepsilon/N),
                O\left(\frac{1}{N (N+n)^2}\right)
            \right\}.
            }
        \]
        \RED{
        The first condition preserves the spectral gap.
        The second is needed so that $2\pi \frac{r^2}{N K^2}
            \left(1-\mathcal{O}(2^{-n})\right)
            -\mathcal{O}(r^3)\geq 1/\text{poly}(n)$. 
        Then we may take $\delta \geq 1/\text{poly}(n)$
        }
        \RED{and obtain}
        $$
            \delta \leq i\mathcal{A}_\lambda\leq \frac{\pi}{2}
        $$
        if $x\in L_{\text{yes}}$ and
        $$
            \delta \leq -i\mathcal{A}_\lambda\leq \frac{\pi}{2}
        $$
        if $x\in L_{\text{no}}$ for some $\delta >1/\mathrm{poly}(n)$.

        The Berry phase is given by
        $$
            \theta_B= i\int_0^1 \mathcal{A}_\lambda\, \mathrm{d}\lambda.
        $$
        Therefore,
        \begin{equation*}
            \theta_B\in
            \begin{cases}
                \left[\delta, \frac{\pi}{2}\right]       & \text{if } x\in L_\text{yes} \\
                \left[\frac{3\pi}{2}, 2\pi-\delta\right] & \text{if } x\in L_\text{no}.
            \end{cases}
        \end{equation*}
        Corresponding to the definition of the problem, by taking $a=0$ and $b=\pi$, $\theta_B \in[a+\delta/2,b-\delta/2]_{2\pi}$ if $x\in\mathcal{L}_{\mathrm{yes}}$ and $\theta_B\in[b+\delta/2,a-\delta/2]_{2\pi}$ if $x\in\mathcal{L}_{\mathrm{no}}$.

\proofpara{\RED{Spectral gap and energy threshold}}
        Recall
        $
            H_r(\lambda) = H_0 + H_1 + r V(\lambda)
        $
        and the spectral gap of $H_0 + H_1$ is larger than $\Delta_{0+1} \geq \frac{\varepsilon}{N} - O(\varepsilon^2 / \Delta_0) \geq 1/\mathrm{poly}(n)$.
        Because $\|r V(\lambda)\|=r$, by taking $r\leq \Delta_{0+1}/4$,
        the spectral gap of $H_r(\lambda) = H_0 + H_1 + r V(\lambda)$ is lower bounded by $\Delta_{0+1}/2$ for any $\lambda$.
        We can also consider the threshold $E_\text{th}$ for the ground state energy of $H_r(\lambda) = H_0 + H_1 + r V(\lambda)$.
        In both YES and NO case, the ground state energy is $E_0 \leq E_0^{0+1} + O(r)\leq \frac{\varepsilon \mathcal{O}(2^{-n})}{N}+O(r)$ and the first excited state energy is $E_1 \geq E_1^{0+1} -O(r) \geq \frac{\varepsilon}{N} - O(\varepsilon^2 / \Delta_0) -O(r)$.
        Therefore we can choose $E_\text{th}=\frac{\varepsilon}{2N}$ for sufficiently small $\varepsilon$ and $r$.
    \end{proof}

    \subsection{Proof of Corollary~\ref{thm:P^PGQMAcomp}: Containment in $\mathsf{P}^{\mathsf{PGQMA[log]}}$
    }
    \label{sec:P^PGQMAcomp}
    \label{sec:P^PGQMA_containment}
    \begin{proof}
    Now we show that BPE without energy threshold can be solved with $O(\log{(n)})$ queries to a $\mathsf{PGQMA}$ oracle.
    Our strategy is mainly based on Ref.~\cite{ambainis2014physical}.
    The algorithm consists of the following two steps:
    \begin{enumerate}
        \item Query the $\mathsf{PGQMA}$ oracle $O(\log{(n)})$ times and perform binary search to obtain an estimate $\hat{E}_0$ of the ground state energy $E_0$ such that $\hat{E}_0 \in [E_0, E_0 + \Delta_\mathrm{min}/2]$, where $\Delta_\mathrm{min}$ is a lower-bound of the spectral gap of a given Hamiltonian $H(\lambda)$ for all $\lambda$.
        This can be done since the problem of estimating the ground state energy of Hamiltonians with inverse polynomial gap is \textsf{PGQMA}-complete, which was shown in Ref.~\cite{aharonov2022pursuit}.
        \item Solve BPE with energy threshold \RED{$\hat{E}_0 + \Delta_\mathrm{min}/4$} using a single query to the \RED{$\UQMA \cap \mathsf{co}\text{-}\UQMA$} oracle.
              \RED{This threshold satisfies the promise in Definition~\ref{def:BPE_withEth}, since it lies between \(E_0+\Delta_{\mathrm{min}}/4\) and \(E_1-\Delta_{\mathrm{min}}/4\).}
              \RED{Note that this step can be done with the $\mathsf{PGQMA}$ oracle since $\UQMA \cap \mathsf{co}\text{-}\UQMA \subseteq \mathsf{PGQMA}$.}
    \end{enumerate} 
    \end{proof}

\subsubsection{$\mathsf{P}^{\RED{\UQMA \cap \mathsf{co}\text{-}\UQMA}\mathsf{[log]}}$-hardness under polynomial-time truth-table reductions}
    \label{sec:P^dUQMA_hardness}

    We mainly follow the proof of $\mathsf{P}^{\mathsf{QMA[log]}}$-hardness under polynomial-time truth-table reductions of the Spectral Gap problem in Ref.~\cite{yirka2025note}.
    The following is the corollary of Theorem~\ref{thm:UQMAcomp}.

    \begin{corollary}
        \label{cor:PGQMAhard}
        Berry phase estimation problem without energy threshold is \RED{$\UQMA \cap \mathsf{co}\text{-}\UQMA$}-hard under polynomial-time many-one/Karp reductions.
    \end{corollary}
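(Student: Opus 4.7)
The plan is to observe that the reduction already constructed in Section~\ref{sec:UQMA_hardness} for the \dUQMA-hardness of BPE with energy threshold remains a valid reduction once we simply omit the threshold from its output; no new construction is needed.

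First I would recall the output of the reduction in the proof of Theorem~\ref{thm:UQMAcomp}. Given an instance $x$ of any \dUQMA~problem, that reduction produces in polynomial time a tuple $(\{H_r(\lambda)\}_{\lambda\in[0,1]}, a, b, \delta, \Delta_{\mathrm{min}}, E_{\mathrm{th}})$ in which the Hamiltonian family $\{H_r(\lambda)\}$ satisfies condition $(\mathcal{C}1)$ of Definition~\ref{def:BPE_guided} (non-degeneracy of the ground state, spectral gap at least $\Delta_{\mathrm{min}} \ge 1/\mathrm{poly}(n)$, polynomial bounds on $\|H\|$, $\|dH/d\lambda\|$, $\|d^2 H/d\lambda^2\|$, and $H_r(0)=H_r(1)$), and for which the Berry phase obeys $\theta_B \in [a+\delta/2, b-\delta/2]_{2\pi}$ in the YES case and $\theta_B \in [b+\delta/2, a-\delta/2]_{2\pi}$ in the NO case. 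All of these properties are guaranteed independently of the value of $E_{\mathrm{th}}$.

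Next I would define the reduction to BPE without energy threshold by mapping $x$ to the tuple $(\{H_r(\lambda)\}_{\lambda\in[0,1]}, a, b, \delta, \Delta_{\mathrm{min}})$, i.e.\ by discarding $E_{\mathrm{th}}$. Definition~\ref{def:BPE_withoutEth} requires exactly condition $(\mathcal{C}1)$ already supplied, together with the two-interval promise on $\theta_B$, which is precisely the promise guaranteed by the \dUQMA~reduction. Hence YES instances map to YES instances, NO to NO, and the map is polynomial-time computable, giving a many-one reduction from an arbitrary \dUQMA~problem to BPE without energy threshold.

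There is no substantial obstacle: all the technical work, namely the small-penalty clock Hamiltonian, the second-order perturbative computation of the Berry connection $\mathcal{A}_\lambda$, the confinement of $\theta_B$ to the two separated intervals, and the spectral gap lower bound on $H_r(\lambda)$, has already been carried out in Section~\ref{sec:UQMA_hardness}. The only point to check is that BPE without energy threshold is, by comparing Definitions~\ref{def:BPE_withEth} and~\ref{def:BPE_withoutEth}, obtained from BPE with energy threshold precisely by dropping the threshold condition, so the corollary follows immediately.
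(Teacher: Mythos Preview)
Your proposal is correct and matches the paper's intent: the paper states the corollary as an immediate consequence of Theorem~\ref{thm:UQMAcomp} without further proof, and your argument—dropping $E_{\mathrm{th}}$ from the output of the reduction in Section~\ref{sec:UQMA_hardness} and noting that Definition~\ref{def:BPE_withoutEth} is Definition~\ref{def:BPE_withEth} minus the threshold condition—is exactly the reasoning that makes this immediate.
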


    Now we are ready to prove the $\mathsf{P}^{\RED{\UQMA \cap \mathsf{co}\text{-}\UQMA}\mathsf{[log]}}$-hardness.

    \begin{proof}[Proof of the $\mathsf{P}^{\RED{\mathsf{UQMA} \cap \mathsf{co}\text{-}\mathsf{UQMA}}\mathsf{[log]}}$-hardness in Corollary~\ref{thm:P^PGQMAcomp}]
        First, we observe that BPE without energy threshold is $\mathsf{P}^{\RED{\UQMA \cap \mathsf{co}\text{-}\UQMA}\mathsf{[log]}}$-hard under polynomial-time Turing reductions because any $\mathsf{P}^{\RED{\UQMA \cap \mathsf{co}\text{-}\UQMA}\mathsf{[log]}}$ computation can be simulated in polynomial-time by substituting any query to \RED{$\UQMA \cap \mathsf{co}\text{-}\UQMA$}~ oracle with a query to ``BPE without energy threshold''.
        Corollary~\ref{cor:PGQMAhard} ensures that one query to \RED{$\UQMA \cap \mathsf{co}\text{-}\UQMA$}~ oracle can be substituted by one query to ``BPE without energy threshold''.
        Thus, the above polynomial-time Turing reduction requires at most a logarithmic number of queries.
        As shown in Refs.~\cite{beigel1991bounded, yirka2025note}, a logarithmic number of adaptive queries can be simulated by a polynomial number of non-adaptive queries.
        This implies that BPE without energy threshold is $\mathsf{P}^{\RED{\mathsf{UQMA}\cap\mathsf{co}\text{-}\mathsf{UQMA}}\mathsf{[log]}}$-hard under polynomial-time truth-table reductions.
    \end{proof}

    \section*{Acknowledgements}
   CK was supported by JST ASPIRE Japan Grant Number JPMJAP2319.
RH was supported by JST PRESTO Grant Number JPMJPR23F9 and MEXT KAKENHI Grant Number 21H05183, Japan.
KS is supported by MEXT Q-LEAP Grant No. JPMXS0120319794 and JST SPRING Grant No. JPMJSP2138.
    \bibliographystyle{alpha}
    \bibliography{main}

    \appendix

    \section{Review of BPE Algorithm by Murta et al.\label{sec:berry_algorithms}}

    Here, we review an earlier algorithm introduced by Murta et al.~\cite{murta2020berry}, which estimates the Berry phase $\theta_B$ acquired by the non-degenerate ground state of a $k$-local Hamiltonian family $\{H(\lambda)\}_{\lambda \in [0,1]}$ undergoing cyclic adiabatic evolution. The strategy involves simulating the closed loop twice—once forward and once backward—so that the dynamical phases cancel and the geometric (Berry) phases accumulate. However, this method relies on time-reversal symmetry and therefore restricts the Berry phase $\theta_B$ to the interval $[0, \pi)$.
In this section, we outline Murta et al.'s algorithm in detail. Our improved algorithm, which removes the symmetry requirement and allows estimation of \( \theta_B \) over the full range \( [0, 2\pi) \), is presented separately in Section~\ref{sec:improved_algorithm}.

To isolate the Berry phase \( \theta_B \), Murta et al.\ consider evolving the system along the closed loop with the Hamiltonian sign reversed:
\[
    \bar U(T) = \mathcal{T} \exp\left(+i \int_0^T H(\lambda(t)) dt\right).
\]
Since the eigenstates of \( H \) remain unchanged, the dynamical phase flips sign while the Berry phase does not. Consequently,
\[
    \bar U(T) U(T) |\psi(0)\rangle = e^{i 2\theta_B} |\psi(0)\rangle,
\]
so the composite evolution isolates \( \theta_B \) while canceling \( \theta_D \). The algorithm proceeds as follows:

\begin{enumerate}
    \item \textbf{Ground-state preparation.} Prepare the ground state \( \ket{\psi(0)} \) of \( H(0) \).

    \item \textbf{Digitized adiabatic evolution.} Discretize the loop into \( N = \operatorname{poly}(n,1/\varepsilon) \) steps of duration \( \delta t \) satisfying \( \delta\lambda/\delta t \ll \Delta^2 \) to ensure adiabaticity. The resulting unitaries are
          \[
              U_{\text{loop}}(T,0) = \prod_{j=1}^{N} e^{-i H(\lambda_j)\delta t}, \quad
              \bar{U}_{\text{loop}}(T,0) = \prod_{j=1}^{N} e^{+i H(\lambda_j)\delta t},
          \]
          which can be implemented using a Trotter-Suzuki decomposition with polynomial depth.

    \item \textbf{Dynamical-phase cancellation.} Apply the time-reversed evolution \( \bar U_{\text{loop}} \) immediately after \( U_{\text{loop}} \). On the ground state, this yields
          \[
              \bar U_{\text{loop}} U_{\text{loop}} \ket{\psi(0)} = e^{i 2\theta_B} \ket{\psi(0)}.
          \]
          (For time-reversal-symmetric families, it suffices to reverse the sign of the Hamiltonian during the second half of the loop, reducing circuit depth by half.)

    \item \textbf{Phase Estimation.} Use standard quantum phase estimation (QPE) on the unitary \( \bar U_{\text{loop}} U_{\text{loop}} \), controlled on an \( m = \lceil \log_2(1/\varepsilon) \rceil \)-qubit register. This yields an estimate \( p \) such that \( |2\theta_B / 2\pi - p / 2^m| < \varepsilon \). In cases where \( \theta_B \) is known to be small (e.g., on finely discretized lattices), a Hadamard test with a single ancilla may be used instead, at the cost of increased sampling noise.
\end{enumerate}

The adiabatic evolution subroutine requires a total runtime \( T = \operatorname{poly}(n, 1/\Delta) \), and the full loop is discretized into \( N = \operatorname{poly}(n, 1/\varepsilon) \) Trotter steps. Phase estimation requires \( O(\log(1/\varepsilon)) \) controlled applications of \( \bar{U}_{\text{loop}} U_{\text{loop}} \), each with circuit depth polynomial in \( n \), assuming the Hamiltonian is \( k \)-local and efficiently simulable. Therefore, the entire procedure runs in \textsf{BQP}-time and estimates \( \theta_B \) to inverse-polynomial precision.

However, due to the phase doubling \( e^{i2\theta_B} \), this method restricts the range of estimable Berry phases to \( [0,\pi) \) in the absence of time-reversal symmetry. Our improved algorithm, introduced in Section~\ref{sec:improved_algorithm}, overcomes this limitation and enables full-range estimation modulo \( 2\pi \) without requiring symmetry assumptions.

\end{document}